\documentclass[10pt,a4paper, UKenglish]{article}
\usepackage{fullpage}
\usepackage[utf8]{inputenc}
\usepackage{graphicx}

\RequirePackage{hyperref}
\hypersetup{colorlinks=true, urlcolor=Blue, citecolor=Green, linkcolor=BrickRed, breaklinks, unicode}

\usepackage[dvipsnames,usenames]{xcolor}
\usepackage[nocompress]{cite}
\usepackage{amsmath,mathtools,stmaryrd}


 \newtheorem{lemma}{Lemma}
 \newtheorem{theorem}{Theorem}
 \newtheorem{corollary}{Corollary}

\newenvironment{proof}{\noindent\emph{Proof. }}

\newcommand{\spine}{\ensuremath\mathrm{spine}}
\newcommand{\locus}{\ensuremath\mathrm{locus}}
\newcommand{\topboundary}{\ensuremath\mathrm{top}}
\newcommand{\height}{\ensuremath\mathrm{height}}
\newcommand{\depth}{\ensuremath\mathrm{depth}}

\newcommand{\rank}{\ensuremath\mathrm{rank}}

\newcommand{\parent}{\ensuremath\mathrm{parent}}
\newcommand{\query}{\ensuremath\mathrm{query}}
\newcommand{\checkptr}{\ensuremath\mathrm{check}}
\newcommand{\jumpslice}{\ensuremath\mathrm{next}}
\newcommand{\jumptop}{\ensuremath\mathrm{top}}
\newcommand{\jumpbottom}{\ensuremath\mathrm{bottom}}

\newcommand{\hexit}{\ensuremath\mathrm{hexit}}
\newcommand{\ventry}{\ensuremath\mathrm{ventry}}
\newcommand{\vexit}{\ensuremath\mathrm{vexit}}
\newcommand{\hentry}{\ensuremath\mathrm{hentry}}

\newcommand{\T}{\ensuremath{\mathcal{T}}}

\newcommand{\TD}{\ensuremath{\mathcal{T\!D}}}
\newcommand{\V}{\ensuremath{\mathcal{V}}}
\newcommand{\VD}{\ensuremath{\mathcal{V\!D}}}
\renewcommand{\H}{\ensuremath{\mathcal{H}}}
\newcommand{\HD}{\ensuremath{\mathcal{H\!D}}}

\renewcommand{\S}{\ensuremath{\mathcal{S}}}
\renewcommand{\G}{\ensuremath{\mathcal{G}}}

\newcommand{\fp}{\ensuremath\phi}

\title{Top Tree Compression of Tries\footnote{An extended abstract appeared at ISAAC 2019~\cite{BGGLW2019}}}

\author{Philip Bille \\ \texttt{phbi@dtu.dk} \and 
Pawe\l{} Gawrychowski \\ \texttt{gawry@cs.uni.wroc.pl} \and 
Inge Li G{\o}rtz \\ \texttt{inge@dtu.dk} \and 
Gad M. Landau \\ \texttt{landau@cs.haifa.ac.il} \and 
Oren Weimann \\ \texttt{oren@cs.haifa.ac.il}}

\date{}

\begin{document}

\maketitle

\begin{abstract}
We present a compressed representation of tries based on top tree compression [ICALP 2013] that works on a standard, comparison-based, pointer machine model of computation and supports efficient prefix search queries. Namely, we show how to preprocess a set of strings of total length $n$ over an alphabet of size $\sigma$ into a compressed data structure of worst-case optimal size $O(n/\log_\sigma n)$ that given a pattern string $P$ of length $m$ determines if $P$ is a prefix of one of the strings in time $O(\min(m\log \sigma,m + \log n))$. We show that this query time is in fact optimal regardless of the size of the data structure.

Existing solutions either use $\Omega(n)$ space or rely on word RAM techniques, such as tabulation, hashing, address arithmetic, or word-level parallelism, and hence do not work on a pointer machine. Our result is the first solution on a pointer machine that achieves worst-case $o(n)$ space. Along the way, we develop several interesting data structures that work on a pointer machine and are of independent interest. These include an optimal data structures for random access to a grammar-compressed string and an optimal data structure for a variant of the level ancestor problem.
\end{abstract}

\section{Introduction}
A \emph{string dictionary} compactly represents a set of strings $S = S_1, \ldots, S_k$ to support efficient \emph{prefix queries}, that is, given a pattern string $P$ determine if $P$ is a prefix of some string in $S$. Designing efficient string dictionaries is a fundamental data structural problem dating back to the 1960's. String dictionaries are a key component in a wide range of applications in areas such as computational biology, data compression, data mining, information retrieval, natural language processing, and pattern matching. 

A key challenge and the focus of most of the recent work is to design efficient \emph{compressed string dictionaries}, that take advantage of repetitions in the strings to minimize space, while still supporting efficient queries. While many efficient solutions are known, they all rely on powerful word-RAM techniques, such as tabulation, address arithmetic, word-level parallelism, hashing, etc., to achieve efficient bounds. A natural question is whether or not such techniques are necessary for obtaining efficient compressed string dictionaries or if simpler and more basic computational primitives such as pointer-based data structures and character comparison suffice.

In this paper, we answer this question to the affirmative by introducing a new compressed string dictionary based on \emph{top tree compression} that works on a standard comparison-based, pointer machine model of computation. We achieve the following bounds: let $n = \sum_{i=1}^k |S_i|$ be the total length of the strings in $S$, let $\sigma$ be the size of the alphabet, and $m$ be the length of a query string $P$. Our compressed string dictionary uses $O(n/\log_\sigma n)$ space (space is measured as the number of words and not bits, see discussion below) and supports queries in $O(\min(m\log \sigma,m + \log n))$ time. The space matches the information-theoretic worst-case space lower bound, and we further show that the query time is optimal for any comparison-based query algorithm regardless of the space. Compared to previous work our string dictionary is the first $o(n)$ space solution in this model of computation.

\subsection{Computational Models} 
We consider three computational models. In the \emph{comparison-based model} algorithms only interact with the input by comparing elements. Hence they cannot exploit the internal representation of input elements, e.g., for hashing or word-level parallelism. The comparison-based model is a fundamental and well-studied computational model, e.g., in textbook results for sorting~\cite{Knuth1969}, string matching~\cite{KMP1977}, and computational geometry~\cite{PH1977}. Modern programming languages and libraries, such as the \texttt{C++} standard template library, implement comparison-based algorithms by supporting abstract and user-specified comparison functions as function arguments. In our context, we say that a string dictionary is comparison-based if the query algorithm can only access the input string $P$ via single character comparisons of the form $P[i] \leq c$, where $c$ is a character.	

In the \emph{pointer machine model}, a data structure is a directed graph with bounded out-degree. Each node contains a constant number of data fields or pointer to other nodes and algorithms must access the data structure by traversing the graph. Hence, a pointer machine algorithm cannot implement random access structures such as arrays or perform address arithmetic. The pointer machine captures linked data structures such as linked-lists and search trees. The pointer machine model is a classic and well-studied model, see e.g.~\cite{Ta79, Ch90, HMZ16, CR96, AAL12}. 

Finally, in the word RAM model of computation~\cite{Hagerup1998} the memory is an array of memory words, that each contain a logarithmic number of bits. Memory words can be operated on in unit-time using a standard set of arithmetic operations, boolean operations, and shifts. The word RAM model is strictly more powerful than the comparison-based model and the pointer-machine model and supports random access, hashing, address arithmetic, word-level parallelism, etc. (these are not possible in the other models).

The space of a data structure in the word RAM model is the number of memory words used and the space in the pointer machine model is the total number of nodes. To compare the space of the models, we assume that each field in a node in the pointer machine stores a logarithmic number of bits. Hence, the total number of bits we can represent in a given space in both models is within a constant factor of each other. 

\subsection{Previous work}
The classic textbook string dictionary solution, due to Fredkin~\cite{Fredkin} from 1960, is to store the \emph{trie} $T$ of the strings in $S$ and to answer prefix queries using a top-down traversal of $T$, where at each step we match a single character from $P$ to the labels of the outgoing edges of a node. If we manage to match all characters of $P$ then $P$ is a prefix of a string in $S$ and otherwise it is not.

Depending on the representation of the trie and the model of computation we can obtain several combinations of space and time complexity. On a comparison-based, pointer machine model of computation, we can store the outgoing edges of each in a biased search tree~\cite{BentSleatorTarjan85}, leading to an $O(n)$ space solution with query time $O(\min(m\log \sigma,m + \log n))$. 

We can compress this solution by merging maximal identical complete subtrees of $T$~\cite{DST1980}, thus replacing $T$ by a directed acyclic graph (DAG) $D$ that represents $T$. This leads to a solution with the same query time as above but using only $O(d)$ space, where $d$ is the size of the smallest DAG $D$ representing $T$. The size of $D$ can be exponentially smaller than $n$, but may not compress at all. Consider for instance the case where $T$ is a single path of length $n$ where all edges have the same label (i.e., corresponding to a single string of the same letter). Even though $T$ is highly compressible (we can represent it by the label and the length of the path) it does not contain any identical subtrees and hence its smallest DAG has size $\Omega(n)$.  

Using the power of the word RAM model improved representations are possible. Benoit et al.~\cite{BDMRVS2005} and Raman et al.~\cite{RRS2007} gave \emph{succinct} representations of tries that achieve $O(n/\log_\sigma n)$ space and $O(m)$ query time, thus simultaneously achieving optimal query time and matching the worst-case information theoretic space lower bounds. These results rely on powerful word RAM techniques to obtain the bounds, such as  tabulation and hashing. Numerous trie representations are known, see e.g., \cite{darragh1993bonsai,PR2015,GO2015,KMF2017,Aoe1989,AF2018,AS2010,BBV2010,AF2014,BGS2017,TKKNIBT2019,YK2014,KMF2017a,TISA2017,yata2011}, but these all use word RAM techniques to achieve near optimal combinations of time and space.

Another approach is to compress the strings according to various measures of repetitiveness, such as the empirical $k$-th order entropy~\cite{GV2005,Makinen2003,NM2007,Sadakane2000}, the size of the Lempel-Ziv parse~\cite{BCGPM2015,CE2018,Karkkainen95lempel-zivparsing,Gagie2012,gagie2014lz77,BEGV2018,TIIBT2019}, the size of the smallest grammar~\cite{CN2011,CN2012,Gagie2012}, the run-length encoded Burrows-Wheeler transform, \cite{MN2005,MNSV2009,MNSV2010,SVMN2008}, and others~\cite{NP2019,BGGKOPT2015,TGFIA2017,BGGMS2014,FGNPS2017,AF2014}. The above solutions are designed to support more general queries on the strings, but as noted by Ars and Fischer~\cite{AF2014} they are straightforward to adapt to prefix queries. For example, if $z$ is size of the Lempel-Ziv parse of the concatenation of the strings in $S$, the result of Christiansen and Etienne~\cite{CE2018} implies a string dictionary of size $O(z \log (n/z))$ that supports queries in time $O(m + \log^\epsilon n)$. Since $z$ can be  exponentially smaller than $n$, the space is significantly improved on highly-compressible strings. Since $z = O(n/\log_\sigma n)$ in the worst-case, the space is always $O(\frac{n}{\log_\sigma n} \log (\frac{n}{n/\log_\sigma n})) = O(\frac{n \log \log_\sigma n}{\log_\sigma n})$ and thus almost optimal compared to the information theoretic lower bound. Similar bounds are known for the other measures of repetitiveness. As in the case of succinct representations of tries, all of these solutions use word RAM techniques. 

\subsection{Our results}
We propose a new compressed string dictionary that achieves the following bounds:
\begin{theorem}\label{thm:main}
	Let $S$ be a set of strings of total length $n$ over an alphabet of size $\sigma$. On a comparison-based, pointer machine model of computation, we can construct a compressed string dictionary that uses $O(n/\log_\sigma n)$ space and answer queries in $O(\min(m\log \sigma,m + \log n))$ time.
\end{theorem}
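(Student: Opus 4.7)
The plan is to build the trie $T$ of the strings in $S$ and apply top tree compression to obtain a top tree DAG $\TD$ of size $O(n/\log_\sigma n)$, matching the information-theoretic lower bound. A prefix query is then answered by simulating the canonical top-down trie walk directly on $\TD$, without ever materializing $T$. Each cluster of the top tree represents a connected subgraph of $T$ with constant-size boundary, so the simulated walk alternates between descending along a cluster's spine and selecting an outgoing edge at a branching trie node inside a cluster.

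For the $O(m\log\sigma)$ bound, we walk through $\TD$ one character at a time. Descending a single spine step takes $O(1)$ via pointer successors in the cluster representation. Selecting a branch at a trie node takes $O(\log\sigma)$ via a balanced search tree on edge labels; these trees are stored once per distinct cluster in $\TD$, so DAG sharing keeps the total within the $O(n/\log_\sigma n)$ space budget. Summing over the $m$ matched characters yields the bound. For the $O(m+\log n)$ bound, we replace the balanced search trees by biased search trees weighted by subtrie sizes; the biased-tree access lemma then pays $O(m+\log n)$ for all branch selections on a root-to-leaf walk of length $m$. To achieve the same amortized cost for spine traversal we use two further pointer-machine primitives developed earlier in the paper: optimal random access to a grammar-compressed string, and a pointer-machine variant of the level ancestor problem on the top tree. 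After an $O(\log n)$ initial positioning per query, these permit advancing through $P$ at amortized $O(1)$ per character, and taking the better of the two procedures gives the stated minimum.

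The hard part is implementing the level ancestor and grammar random-access primitives on a pointer machine, since the known solutions with matching time bounds rely on word-RAM tricks such as tabulation, hashing, and address arithmetic. I expect to overcome this by exploiting the recursive, geometrically shrinking cluster decomposition intrinsic to top trees: the top tree has depth $O(\log n)$ and the number of clusters roughly halves at every level, giving a natural pointer-based ancestor structure on which the required operations can be implemented without any word-RAM capabilities. A secondary technical point is arguing that the per-cluster biased trees, when shared across the DAG, still realize the biased-tree access lemma globally along any root-to-leaf walk in $T$; this will follow from the fact that the subtrie weights carried by a cluster depend only on the cluster's internal structure and its boundary weights, so a single tree per distinct cluster faithfully implements the weighting of every occurrence of that cluster.
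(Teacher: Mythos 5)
Your overall plan (trie plus top DAG of size $O(n/\log_\sigma n)$, then simulate the trie walk on the compressed structure, alternating spine descent with branch selection) matches the paper's in spirit, but the two concrete mechanisms you rely on do not hold up. First, branch selection: in the top DAG a trie node's outgoing edges are \emph{not} contained in a single cluster --- they are assembled by a whole hierarchy of horizontal merges, each cluster holding only a contiguous subrange of the children of its top boundary node. Storing a balanced (or biased) search tree over these edges ``once per distinct cluster'' costs space proportional to the number of such edges in each cluster, and since nested horizontal clusters duplicate one another's edge sets, the total is not $O(n_{\TD})$; DAG sharing does not rescue this when the clusters are distinct. The paper avoids it by storing only $O(1)$ words per cluster and solving a \emph{horizontal access problem}: the child labels are encoded as characteristic vectors generated by a gapped grammar (the horizontal top DAG $\HD$), and each branch query becomes a random access into a string of length at most $\sigma$, answered in $O(\log\sigma)$ time --- which in turn requires a new linear-space, pointer-machine weighted level ancestor structure (this is where the real work of the paper lies, and your proposal has no substitute for it). Your biased-tree argument for the $O(m+\log n)$ bound has the same space problem plus a correctness one: the subtrie weights below a cluster's bottom boundary node depend on what hangs below it in each occurrence, so a single weighted tree per distinct DAG node cannot faithfully represent all occurrences; the paper instead gets $O(m+\log n)$ simply from the $O(\log n)$ depth of the top tree with $O(1)$ work per horizontal merge, no biased trees needed.

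Second, spine descent: ``$O(1)$ via pointer successors in the cluster representation'' assumes a structure that does not exist --- a cluster is a single DAG node and its spine characters are encoded only recursively. Producing the spine characters top-down online at $O(1)$ each (needed for $O(m\log\sigma)$) is itself a contribution of the paper: it builds a vertical top DAG and a new path-extraction data structure (reporting a root-to-$v$ path top-down at $O(1)$ per node on a pointer machine, via the $\depth\le\height$ decomposition), applied to the left-path suffix forest. For the $O(m+\log n)$ branch the paper instead uses a telescoping argument over cluster heights, charging extraction cost $O(\ell+\height(C)-\height(\vexit(C,\ell)))$ along a single root-to-leaf path of the top tree. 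Your appeal to grammar random access and level ancestors for ``amortized $O(1)$ per character'' is not enough as stated, since random access alone costs $\Theta(\log)$ per character; some analogue of these extraction/telescoping arguments is the missing ingredient.
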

Note that the space bound for Theorem~\ref{thm:main} matches the information theoretic lower bound and the time bound matches the classic linear space implementation of tries with biased search trees. The result is the first $o(n)$ space solution in this model of computation. Furthermore, we show that this time bound is optimal. 

\begin{theorem}
\label{thm:comparison}
For any $n$, $m\leq n$, and $\sigma \geq 2$, there exists a set $S$ of strings of total length $n$ over an alphabet of size $\sigma$ such that any comparison-based algorithm that checks if a given pattern $P$ of length $m$ belongs to $S$ needs to perform $\Omega(\min(m\log\sigma,m+\log n))$ comparisons in the worst case.
\end{theorem}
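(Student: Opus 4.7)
The plan is to exhibit, for each triple $(n, m, \sigma)$, an explicit worst-case dictionary $S$ of total length at most $n$ and play an adversary against any comparison-based decision tree for prefix-membership. Define $m^\star = \min\{m,\, \lfloor \log_{\sigma/2}(n/m)\rfloor\}$. In the main regime $m^\star \ge 1$, I would take $S$ to be all strings of length $m$ whose first $m-m^\star$ characters equal the letter $1$ and whose last $m^\star$ characters range over arbitrary odd letters in $[\sigma]$; then $|S| = \lceil\sigma/2\rceil^{m^\star}$ and, by the choice of $m^\star$, the total length is at most $n$. In the degenerate regime $m^\star = 0$ (i.e., $\sigma/2 > n/m$), this construction collapses to a single string, so I would instead take $S = \{o \cdot 1^{m-1} : o \in A\}$, with $A$ a set of $\lfloor n/m\rfloor$ distinct odd letters in $[\sigma]$, which fits because $\sigma/2 > n/m$ guarantees that many odd letters exist.

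The lower bound is proved via an adversary that maintains an interval $I_j \subseteq [\sigma]$ for each position $j$, initially $I_j = [\sigma]$. Upon a query $P[j] \le c$, the adversary splits $I_j$ at $c$ into two halves and picks the half that contains a \emph{legal} character at position $j$ (an odd letter at a variable position; the letter $1$ at a fixed position; an element of $A$ at position $1$ in the degenerate case), breaking ties by maximising $|I_j|$ in the main regime and $|I_j \cap A|$ at position $1$ in the degenerate regime. A short per-step case analysis then shows that in the main regime $|I_j|$ shrinks by at most a factor of two per comparison at position $j$: if both halves contain a legal character the adversary retains one of size at least $|I_j|/2$, and if only one does, the forced half still has at least $|I_j|-1 \ge |I_j|/2$ elements whenever $|I_j| \ge 2$, using that any contiguous interval of length at least two in $[\sigma]$ contains both parities. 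The analogous argument shows that $|I_j \cap A|$ halves at position $1$ in the degenerate case. For the algorithm to safely output YES, the product box $\prod_j I_j$ must lie entirely inside $S$, which forces each $I_j$ to be a singleton; the halving guarantees then force at least $\log_2 \sigma$ comparisons at every variable position, at least one at every fixed position, and at least $\log_2 \lfloor n/m\rfloor$ at position $1$ in the degenerate case.

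Summing over positions, the total number of comparisons is at least $m + m^\star(\log \sigma - 1)$ in the main regime and $(m-1) + \log_2 \lfloor n/m\rfloor$ in the degenerate one. When $m^\star = m$, the case hypothesis $(\sigma/2)^m \le n/m$ implies $m\log\sigma \le m + \log n$, so the bound $m\log\sigma$ matches $\min(m\log\sigma, m + \log n)$ exactly. In the remaining cases the bound evaluates to $\Omega(m + \log(n/m))$, which equals $\Omega(m + \log n)$ after absorbing the $\log m$ slack into the additive $m$ term via $\log m \le m$; this is at least $\Omega(\min(m\log\sigma, m+\log n))$ because the $\min$ is always at most $m+\log n$. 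The main technical step is the halving claim for $|I_j|$ (resp.\ $|I_j \cap A|$); once it is established, the remainder is routine case analysis and total-length bookkeeping.
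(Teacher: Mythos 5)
Your adversary argument is sound and takes a genuinely different route from the paper. The paper proves its key lemma (parity of $\sum_i P[i]$ needs $\Omega(m\log\sigma)$ comparisons) by a counting argument: a shallow decision tree has fewer than $\sigma^m/4$ leaves, each leaf is a box $[a_1,b_1]\times\cdots\times[a_m,b_m]$, so some box contains two even-sum strings, and bumping one coordinate by $+1$ produces an odd-sum string in the same box. It then pads: parity on the first $\ell=\lfloor\log(n/m)/\log\sigma\rfloor$ positions and forced zeros afterwards, giving $\Omega(m+\ell\log\sigma)=\Omega(m+\log n)$. You instead choose a \emph{product-set} dictionary (fixed letters on $m-m^\star$ positions, odd letters on $m^\star$ positions) and run an explicit adversary that keeps, per position, an interval of consistent characters; because a contiguous interval of length at least two contains both parities, a YES answer forces every interval to a singleton, and your halving claim (larger legal half, or a forced half missing at most one element) correctly yields $\log_2\sigma$ comparisons per free position and one per fixed position. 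The product structure is what lets the adversary work coordinate-wise, which the paper's non-product parity set does not allow; in exchange the paper's counting lemma is shorter. Your fixed positions play exactly the role of the paper's zero-padding, so the two case analyses are parallel.

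Two parameter slips need repair, though neither harms the asymptotics. First, you define $m^\star$ with base $\sigma/2$ but your dictionary has $\lceil\sigma/2\rceil^{m^\star}$ strings, so for odd $\sigma$ the total length can exceed $n$: with $\sigma=3$ there are two odd letters, and taking $n/m\approx 1.5^{10}\approx 58$ gives $m^\star=10$ and $|S|\cdot m=2^{10}m\gg n$. Define $m^\star=\min\{m,\lfloor\log_{\lceil\sigma/2\rceil}(n/m)\rfloor\}$ instead; the lower-bound arithmetic still gives $m^\star\log_2\sigma=\Omega(\log(n/m))$ for $\sigma\geq 3$. Second, for $\sigma=2$ both $\log_{\sigma/2}$ and $\log_{\lceil\sigma/2\rceil}$ are ill-defined and there is only one odd letter; handle this case separately (there $\min(m\log\sigma,m+\log n)=m$, and the single-string dictionary $\{1^m\}$ already forces $m$ comparisons, since a skipped position can be flipped to the letter $2$). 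Finally, the halving claim is only sketched, but the sketch covers both subcases correctly, so writing it out is routine.
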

Note that Theorem~\ref{thm:comparison} holds regardless of the space used, holds even for weaker membership queries, and only assumes that the algorithm is a comparison-based algorithm. We note that the upper bound holds on a pointer machine with comparisons and additions as arithmetic operations, while the lower bound only assumes comparisons.  

\subsection{Techniques}
%



In {\em top tree compression}~\cite{BGLW2015} one transforms a labeled tree $T$ into another tree $\T$ (called a \emph{top tree}) that is of height $O(\log n)$ and represents a hierarchical decomposition of $T$ into connected subgraphs (called \emph{clusters}). Each cluster overlaps with other clusters in at most two nodes. Every leaf in $\T$ corresponds to a cluster consisting of a single edge in $T$ and every internal node in $\T$ corresponds to a merge of two clusters. The top tree $\T$ is then compressed using the classical DAG compression resulting in the \emph{top DAG} $\TD$. The top DAG supports basic navigational queries on $T$ in $O(\log n)$ time, has size $O(n/\log_\sigma n)$, can compress exponentially better than DAG compression, and is never worse than DAG compression by more than a $O(\log n)$ factor~\cite{HR2015,BGLW2015,BFG2017,DG2018}.


Our main technical contribution is implementing prefix search optimally on the top DAG. To this end, we develop several optimal pointer machine data structures of independent interest:

\begin{itemize} 
\item A data structure for the \emph{path extraction problem}, that asks to compactly represent an edge-labeled tree $T$ such that given a node $v$ we can efficiently return the labels on the root-to-$v$ path in $T$. While an optimal solution for this problem can be obtained by plugging in
known tools, more specifically a fully persistent queue~\cite{HoodM81}, we believe that our self-contained solution is simpler and elegant.

\item
A data structure for the \emph{weighted level ancestor} problem, that asks to compactly represent an edge-weighted tree $T$ such that given a node $v$ and a positive number $x$ we can efficiently return the rootmost ancestor of $v$ whose distance from the root is at least $x$. An immediate implication of our weighted level ancestor data structure is an optimal data structure for the \emph{random access problem} on grammar compressed strings
. This improves a SODA'11 result~\cite{BLRSW2015} that required word RAM bit tricks. 

\item  A data structure for the \emph{spine path extraction problem}, that asks to compactly represent a top-tree compression $\TD$ such that given a cluster $C$ we can efficiently return the characters of the unique path between the two boundary nodes of $C$. 

\item
For the lower bound, we show that any algorithm that given a string $P[1,m]$ checks if $\sum_{i=1}^{m} P[i] = 0 \pmod 2$ needs to perform $\Omega(m\log \sigma)$ comparisons in the worst case. We then show  that when $n\geq m\sigma^{m}$ this implies the $\Omega(m\log \sigma)$ bound for our problem and when $n< m\sigma^{m}$ it implies the $\Omega(m + \log n)$ bound for our problem.  
\end{itemize} 

\subsection{Roadmap}
In Section~\ref{sec:preliminaries} we recall \emph{top trees} and how a top tree of a tree $T$ is obtained by merging (either vertically or a horizontally) the top trees of two subtrees of $T$ that overlap on a single node. In Section~\ref{sec:simple} we present a simple randomized Monte-Carlo  word RAM solution to the compressed string indexing problem that is the basis of our deterministic pointer machine solutions in the following sections. The solution is based on top trees and efficiently handles horizontal merges (deterministically) and vertical merges (randomized Monte-Carlo). 
In Section~\ref{sec:spineextraction} we show how to handle vertical merges deterministically on a pointer machine, and in Section~\ref{sec:part1puttingtogether} we show that this suffices to achieve the $O(m + \log n)$ query time in Theorem~\ref{thm:main}. 
We show a different way to handle vertical merges in Section~\ref{sec:Spine Path Extraction} and horizontal merges in Section~\ref{sec:horizontalaccess}. In Section~\ref{sec:mlogsigma} we show that these suffice to achieve the $O(m \log \sigma)$ query time in  Theorem~\ref{thm:main}. Finally, in Section~\ref{sec:lowerbound} we give a matching lower bound showing that the query time in Theorem~\ref{thm:main} is optimal regardless of the size of the structure.

\section{Preliminaries}\label{sec:preliminaries}
 In this section we briefly review Karp-Rabin fingerprints~\cite{KarpRabin}, top trees~\cite{AHLT2005}, and top tree compression~\cite{BGLW2015}.  

\subsection{Karp-Rabin Fingerprints} 
 The Karp-Rabin fingerprint~\cite{KarpRabin} of a string $x$ is defined as $\fp(x) = \sum_{i=1}^{|x|} x[i] \cdot c^i \bmod p$, where $c$ is a randomly chosen positive integer, and $2N^{c+4} \leq p \leq 4N^{c+4}$ is a prime. Karp-Rabin fingerprints guarantee that given two strings $x$ and $y$, if $x = y$ then $\fp(x) = \fp(y)$. Furthermore, if $x \neq y$, then with high probability $\fp(x) \neq \fp(y)$. Fingerprints can be composed and subtracted as follows.
\begin{lemma}\label{lemma:fingerprints}
Let $x = y z$ be a string decomposable into a prefix $y$ and suffix $z$. Given any two of the Karp-Rabin fingerprints $\fp(x)$, $\fp(y)$ and $\fp(z)$, it is possible to calculate the remaining fingerprint in constant time.
\end{lemma}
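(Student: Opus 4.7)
The plan is to derive an explicit algebraic identity relating the three fingerprints and then solve it for each unknown. By splitting the index range in the definition and substituting $x[i]=y[i]$ for $i\leq|y|$ and $x[i]=z[i-|y|]$ for $i>|y|$, one obtains
\[
\fp(x) \;=\; \sum_{i=1}^{|y|} y[i]\cdot c^{i} \;+\; \sum_{j=1}^{|z|} z[j]\cdot c^{\,j+|y|} \;\equiv\; \fp(y) + c^{|y|}\cdot \fp(z) \pmod{p}.
\]
From this single identity all three cases follow: $\fp(x)$ is immediate given $\fp(y),\fp(z)$; $\fp(y)=\fp(x)-c^{|y|}\fp(z)\bmod p$ given $\fp(x),\fp(z)$; and $\fp(z)=c^{-|y|}(\fp(x)-\fp(y))\bmod p$ given $\fp(x),\fp(y)$, where the modular inverse $c^{-|y|}$ exists because $p$ is prime and $c<p$.

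The main obstacle is that a straightforward implementation needs $c^{|y|}\bmod p$ (and its inverse), which would naively cost $O(\log|y|)$ time by repeated squaring. To achieve constant time I would assume, as is standard when working with Karp--Rabin fingerprints, that each fingerprint of a string $w$ is stored together with the auxiliary quantity $c^{|w|}\bmod p$ (and, symmetrically, $c^{-|w|}\bmod p$). These auxiliary values themselves satisfy the multiplicative composition rule $c^{|x|}=c^{|y|}\cdot c^{|z|}\bmod p$, so they can be maintained under the same concatenation/split operations at no extra cost. With this augmentation in place, each of the three derivations above reduces to a constant number of additions, subtractions and multiplications modulo $p$, establishing the claim. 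Hence the only genuine technicality is bookkeeping the lengths' powers of $c$ alongside the fingerprints, after which the lemma follows from the single identity above.
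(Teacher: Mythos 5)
Your derivation is correct and is essentially the standard argument behind this lemma, which the paper itself states without proof as a classical property of Karp--Rabin fingerprints: the identity $\fp(x)\equiv\fp(y)+c^{|y|}\cdot\fp(z)\pmod{p}$, solved for whichever of the three quantities is unknown, with the power $c^{|w|}\bmod p$ (and its modular inverse) stored or maintained alongside each fingerprint so that no repeated squaring is needed at query time. Your observation about bookkeeping these powers is exactly the convention used implicitly in the paper (e.g., when all prefix fingerprints of $P$ are precomputed in $O(m)$ time), so the proposal is complete and consistent with the intended proof.
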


\subsection{Clustering}  
Let $v$ be a node in $T$ with children $v_1, \ldots, v_k$ in left-to-right order. Define $T(v)$ to be the subtree induced by $v$ and all proper descendants of $v$. Define $F(v)$ to be the forest induced by all proper descendants of $v$. For $1\leq s \leq r \leq k$ let $T(v, v_s, v_r)$ be the connected component induced by the nodes $\{v\} \cup T(v_s) \cup T(v_{s+1}) \cup \cdots \cup T(v_r)$.

A {\em cluster} with {\em top boundary node} $v$ is a connected component of the form $T(v, v_s, v_r)$, $1\leq s \leq r \leq k$. A {\em cluster} with {\em top boundary node} $v$ and {\em bottom boundary node} $u$ is a connected component of the form $T(v, v_s, v_r) \setminus F(u)$, $1 \leq s \leq r \leq k$, where $u$ is a node in $ T(v_s) \cup  \cdots \cup T(v_r)$. We denote the top boundary node of a cluster $C$ by $\topboundary(C)$. 
Clusters can therefore have either one or two boundary nodes. For example, let $p(v)$ denote the parent of $v$ then a single edge 
$(v,p(v))$  of $T$ is a cluster where $p(v)$ is the top boundary node. If $v$ is a leaf then there is no bottom boundary node, otherwise $v$ is a bottom boundary node. Nodes that are not boundary nodes are called \emph{internal} nodes. The path between the top and bottom boundary nodes in a cluster $C$ is called the cluster's spine, and the string obtained by concatenating the labels on the spine from top to bottom is denoted $\spine(C)$. 

Two edge disjoint clusters $A$ and $B$ whose vertices overlap on a single boundary node can be \emph{merged} if their union $C = A \cup B$ is also a cluster. There are five ways of merging clusters (see Figure~\ref{fig:merge}). Merges of type (a) and (b) are called {\em vertical} merges ($C$ is then a {\em vertical} cluster) and can be done only if the common boundary node is not a boundary node of any other cluster except $A$ and $B$. Merges of type (c),(d), and (e) are called {\em horizontal} merges ($C$ is then a {\em horizontal} cluster) and can be done only if at least one of $A$ or $B$ does not have a bottom boundary node.

\begin{figure}[t]
   \centering
   \includegraphics[scale=0.30]{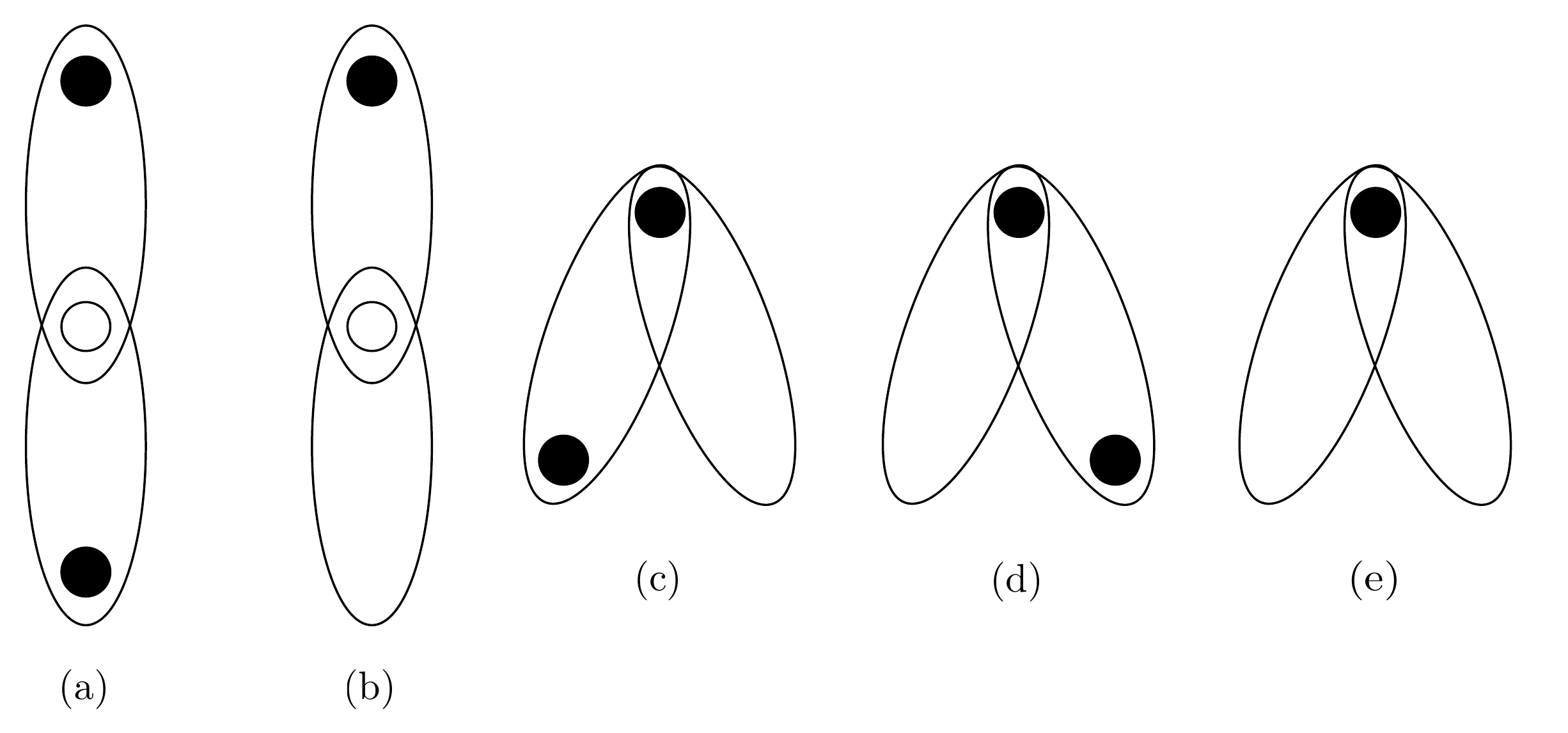}
   \caption{Five ways of merging clusters. The $\bullet$ nodes are boundary nodes that remain  boundary nodes in the merged cluster. The $\circ$ nodes are  boundary nodes that become internal (non-boundary) nodes in the merged cluster. Note that in the last four merges at least one of the  merged clusters has a top boundary node but no bottom boundary node.}
   \label{fig:merge}
\end{figure}

\subsection{Top Trees}
A \emph{top tree} $\T$ of $T$ is a hierarchical decomposition of $T$ into  clusters. It is 
an ordered, rooted, labeled, and binary tree defined as follows (see Figure~\ref{fig:toptrie}(a)-(c)).
\begin{itemize}
\item[$\bullet$] The nodes of $\T$ correspond to clusters of $T$.
\item[$\bullet$] The root of $\T$ corresponds to the cluster $T$ itself. The top boundary node of the root of $\T$ is the root of $T$. 
\item[$\bullet$] The leaves of $\T$ correspond to the edges of $T$. The label of each leaf is the label of the corresponding edge $(u,v)$ in $T$. 
\item[$\bullet$] Each internal node of $\T$ corresponds to the merged cluster of its two children. The label of each internal node is the type of merge it represents (out of the five merging options). The children are ordered so  that the left child is the child cluster visited first in a preorder traversal of $T$. 
\end{itemize}

\begin{lemma}[Alstrup et al.~\cite{AHLT2005}]
	Given a tree $T$ of size $n_T$, we can construct in $O(n_T)$ time a top tree $\T$ of $T$ that is of size $O(n_T)$ and height $O(\log n_T)$.
\end{lemma}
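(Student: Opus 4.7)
The plan is to build $\T$ bottom-up by repeatedly merging clusters, borrowing the rake-and-compress paradigm from parallel tree contraction. I start with the base level in which each of the $n_T - 1$ edges of $T$ is its own cluster; these become the leaves of $\T$. The construction then proceeds in rounds, and within each round a batch of pairwise-compatible merges is performed in parallel, producing the next layer of internal nodes of $\T$.

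Each round consists of two passes. The first is a \emph{rake}: for every node $v$ of $T$ that currently has at least two cluster-children whose top boundary is $v$ and whose bottom boundary is absent, I pair them up in left-to-right order and apply horizontal merges of type (c)--(e) from Figure~\ref{fig:merge}. This resolves the leaf-like clusters hanging off every node. The second is a \emph{compress}: on every maximal chain of clusters whose intermediate boundary nodes are boundary nodes of exactly two clusters, I pair consecutive clusters using an alternating 2-coloring of the chain and apply vertical merges of type (a) or (b). The pairings are chosen so that no boundary node is reused by two concurrent merges inside a round, which keeps every merge legal under the rules of Section~\ref{sec:preliminaries}.

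A charging argument analogous to the analysis of Miller--Reif tree contraction then shows that the number of surviving clusters drops by a constant factor per round: raking halves the number of bottom-boundary-free cluster-children at each branching node, and compressing halves the length of each chain. Hence the process terminates after $O(\log n_T)$ rounds, at which point the single remaining cluster is $T$ itself and sits at the root of $\T$. Since the number of merges performed in round $i$ is $O(n_T/2^i)$ and each merge costs $O(1)$ when the boundary-node degrees are maintained incrementally, the total construction time is $\sum_{i \ge 0} O(n_T/2^i) = O(n_T)$; the size of $\T$ is $O(n_T)$ because it is a binary tree whose $n_T-1$ leaves are the edges of $T$; and the height of $\T$ equals the number of rounds, giving $O(\log n_T)$.

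The main obstacle I anticipate is the compress pass. It must simultaneously (i) select a large independent set of legal vertical merges so that a constant fraction of every long chain is contracted in one round, and (ii) do so deterministically in time proportional to the chain length while maintaining the data structures that identify chains and check their boundary-degree invariants after each merge. A deterministic alternating coloring per chain, combined with a charge-to-removed-cluster accounting for both the rake and compress passes, should suffice; but this is the subtlest piece of the construction and is what forces the two-pass-per-round schedule rather than a more naive single-pass approach.
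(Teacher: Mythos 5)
Your overall strategy is the right one (the lemma you are re-proving is cited from Alstrup et al., and the known construction is indeed a round-based, rake-and-compress-style scheme performing a maximal set of legal merges per round), but the specific merge rules you allow in a round are too weak, and with them the claimed ``constant factor per round'' decrease is false. Your rake pass only merges pairs of sibling clusters that \emph{both} lack a bottom boundary node, whereas the merges of types (c)--(e) in Figure~\ref{fig:merge} only require that \emph{at least one} of the two clusters has no bottom boundary node --- and that extra freedom is essential. Consider a caterpillar: a path $v_1,\dots,v_k$ where every $v_i$ additionally has a pendant leaf $l_i$. At each internal $v_i$ there is exactly one bottom-boundary-free child cluster (the edge to $l_i$), so your rake does nothing there; and $v_i$ is a boundary node of three clusters (the path edge above, the path edge below, and the leaf edge), so your compress does nothing there either. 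Progress happens only at the bottom end of the path, peeling off $O(1)$ nodes per round, so the process takes $\Theta(n_T)$ rounds and the resulting top tree has height $\Theta(n_T)$, not $O(\log n_T)$. A second, smaller issue with the rake as stated: two bottom-boundary-free siblings need not be adjacent in the left-to-right order of child ranges of $v$ (a cluster with a bottom boundary may lie between them), in which case their union is not a cluster and the merge is illegal; merges must always be between \emph{adjacent} sibling clusters.

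The fix is to let each round greedily perform a maximal set of pairwise non-conflicting merges of \emph{all five} types --- in particular horizontal merges of a bottom-boundary-free cluster with an adjacent sibling that does have a bottom boundary --- and then the $O(\log n_T)$ bound does not follow from ``rake halves the leaves and compress halves the chains''; it requires the counting argument of Alstrup et al.\ (also used in the top-tree compression literature) over the auxiliary tree of boundary nodes, which classifies boundary nodes into leaves, path (degree-two) nodes and branching nodes, bounds the branching nodes by the leaves, and concludes that any maximal set of merges eliminates a constant fraction of all current clusters in one round. With that lemma in place your accounting of total work, size, and height (height at most the number of rounds) goes through; without it, and with your restricted rake, the construction stalls on instances like the caterpillar. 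You correctly flag the compress pass as delicate, but the genuine gap is in the rake rule and in the missing global constant-fraction argument.
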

\begin{figure}[t]
   \centering
   \includegraphics[scale=0.25]{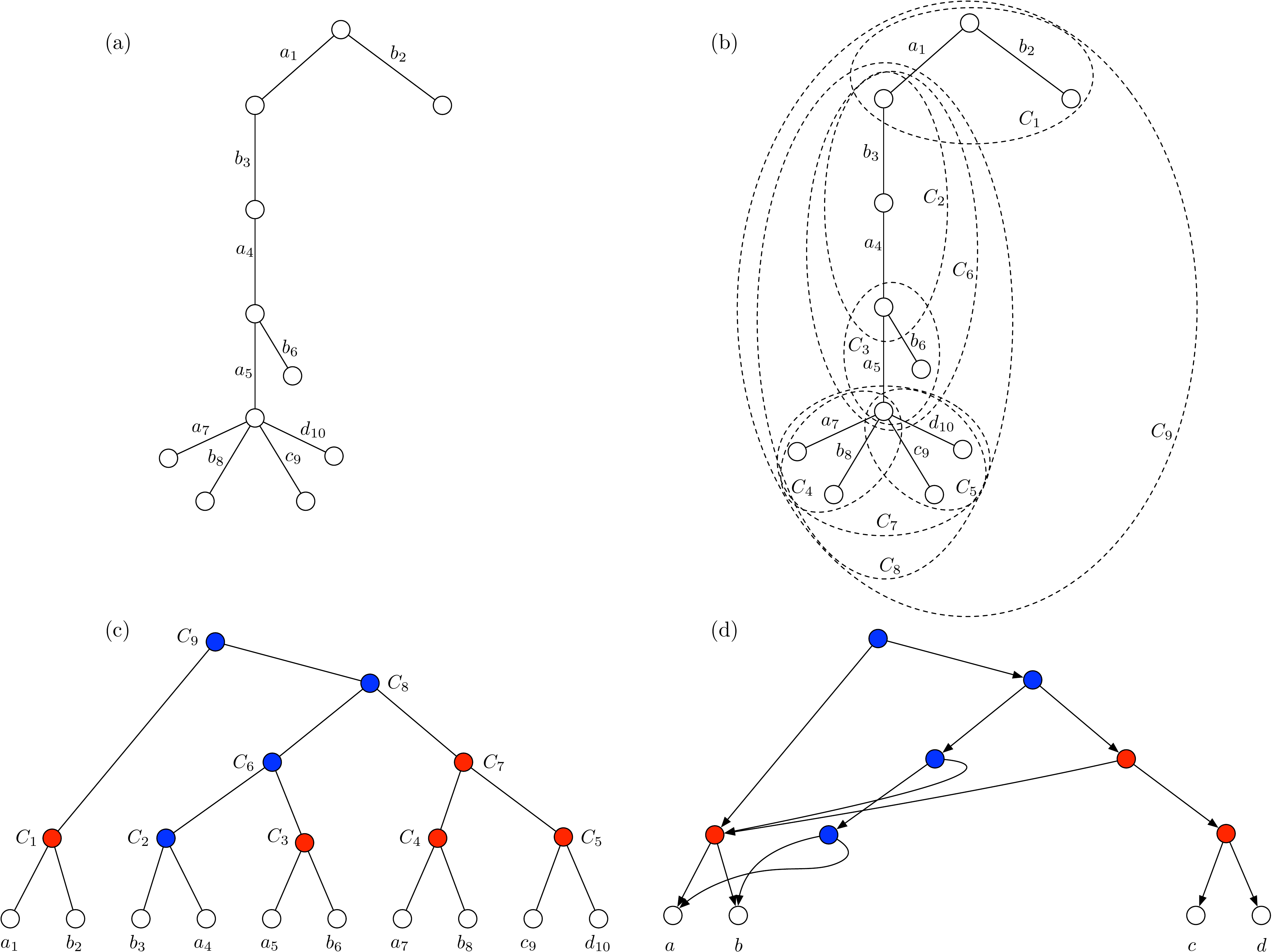}
   \caption{(a) A trie. Each edge label has a subscript to identify the corresponding leaf in the top tree in (c). 
   (b) A hierarchical clustering of (a). (c) The top tree corresponding to (a). Blue nodes are vertical clusters and red nodes are horizontal clusters.  (d) The top DAG of (c).}
   \label{fig:toptrie}
\end{figure}

\subsection{Top Dags} 
Every labeled tree can be represented with a directed acyclic graph (DAG) by identifying identical rooted subtrees and replacing them with a single copy.  
The \emph{top DAG} of $T$, denoted $\TD$, is the minimal DAG representation of the top tree $\T$ of $T$. We can compute it in $O(n_\T)$ time from $\T$~\cite{DST1980}\footnote{Here we use edge labels instead of nodes label. The two definitions are equivalent and edge labels are more natural for tries.}. Top DAGs have important properties for compression and computation~\cite{BGLW2015,BFG2017,HR2015,DG2018}. We need the following optimal worst-case  compression bound.

\begin{lemma}[Dudek and Gawrychowski~\cite{DG2018}] Given an ordered tree with $n_T$ nodes over an alphabet of size $\sigma$, we can construct a top DAG $\TD$ in $O(n_T)$ time of size $n_{\TD} = O(n_T/\log_{\sigma} n_T)$.
\end{lemma}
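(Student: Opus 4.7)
The plan is a two-level micro--macro decomposition of $T$ combined with the linear-time top tree construction of Alstrup, Holm, Lichtenberg, and Thorup (AHLT). Set $s = \lfloor c \log_\sigma n_T \rfloor$ for a sufficiently small constant $c>0$. First, I would partition $T$ into connected edge-disjoint \emph{micro-trees} $M_1,\ldots,M_q$, each of size between $s$ and $2s$ and each joined to the rest of $T$ by at most two boundary nodes; a standard bottom-up greedy procedure (akin to classical micro-tree decompositions) achieves this in $O(n_T)$ time and yields $q = O(n_T/\log_\sigma n_T)$ pieces. Contracting each $M_i$ to its top boundary (plus a designated bottom boundary if present) produces a \emph{macro-tree} of size $O(q)$.

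Next, I would build the top tree in two layers. Run AHLT on each micro-tree $M_i$ in isolation, treating its boundary nodes as the prescribed cluster boundaries, to obtain a top tree $\T_i$ whose root cluster is exactly $M_i$ with its canonical boundary. Run AHLT on the macro-tree to obtain a top tree $\T_{\text{macro}}$ of size $O(q)$ and height $O(\log n_T)$. Splice the two layers by replacing each leaf of $\T_{\text{macro}}$ that corresponds to $M_i$ with the whole $\T_i$. The result is a valid top tree $\T$ of $T$ of size $O(n_T)$ and height $O(\log n_T)$. Finally, produce $\TD$ by the linear-time DAG minimization of Downey, Sethi, and Tarjan.

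The size analysis splits into a macro and a micro contribution. Because AHLT is deterministic and applied to each $M_i$ in isolation, $\T_i$ is a function only of the labeled isomorphism class of $M_i$. The number of such classes is bounded by $C_s \cdot \sigma^s \leq (4\sigma)^s$, where $C_s \leq 4^s$ is the Catalan number counting rooted ordered shapes on $s$ edges. Choosing $c \leq 1/(2\log(4\sigma)/\log\sigma)$ (routine, with a small constant sufficing for all $\sigma \geq 2$) forces $(4\sigma)^s \leq \sqrt{n_T}$, so the micro-layer contributes only $O(\sqrt{n_T}) = o(n_T/\log_\sigma n_T)$ distinct nodes to $\TD$ after DAG minimization. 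The macro-layer contributes at most $O(q) = O(n_T/\log_\sigma n_T)$ nodes, giving the total bound $n_{\TD} = O(n_T/\log_\sigma n_T)$, with overall construction time $O(n_T)$.

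The main obstacle is the context-sensitivity of AHLT: the standard algorithm makes merge decisions based on local cluster ranks that depend on the global tree structure, so two identical labeled subgraphs embedded in different contexts of $T$ need not produce identical top subtrees. Running AHLT on each $M_i$ in isolation eliminates this dependence, but it is then necessary to verify that the root cluster of each $\T_i$ has boundary structure compatible with serving as a leaf cluster inside $\T_{\text{macro}}$. This amounts to designing the partition so every micro-tree has a single top boundary at its root and at most one bottom boundary at its macro-child attachment, and then treating one-boundary and two-boundary macro-leaves separately in $\T_{\text{macro}}$. Once this splicing convention is fixed, identical labeled micro-trees are guaranteed to produce byte-for-byte identical $\T_i$'s, and the counting argument delivers the claimed bound.
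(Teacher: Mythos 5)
The paper does not prove this lemma at all---it is imported wholesale from Dudek and Gawrychowski~\cite{DG2018}---so your proposal should be compared against their argument. Their route keeps the single greedy top tree construction of~\cite{BGLW2015} but \emph{slows down} the merge schedule so that clusters grow by at most a constant factor per round; this makes the subtree of $\T$ below any cluster of size $O(\log_\sigma n_T)$ depend only on that cluster, after which the same counting of distinct ordered $\sigma$-labeled trees bounds the low levels of $\TD$ and the $O(n_T/\log_\sigma n_T)$ clusters at the cut level bound everything above. Your micro--macro decomposition with AHLT run in isolation on each piece and spliced under a macro top tree is a genuinely different route to the same counting argument: it buys locality by construction (each $\T_i$ is a function of the labeled piece alone) instead of by re-engineering the merge schedule, at the price of having to verify that the spliced object is still a top tree of $T$ in this paper's sense. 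The points to nail down are the one you flag plus a few more: (i) the pieces must be clusters in the paper's sense (a contiguous range of children at the shared top boundary node), and what you actually need is only $q=O(n_T/s)$ pieces of size $O(s)$ with at most two boundary nodes---insisting that every piece have size at least $s$ is neither necessary nor always achievable; (ii) the micro top trees must be built with the bottom boundary node prescribed as an external boundary vertex, so your isomorphism classes must record the position of that node, and the micro layer contributes $O\bigl((4\sigma)^{2s}\cdot s\bigr)$ nodes (exponent $2s$, not $s$, and an extra factor $s$ for the size of each $\T_i$), which is still $o(n_T/\log_\sigma n_T)$ for a small enough constant $c$; (iii) the expanded macro merges satisfy the side conditions of the five merge types because, once the macro phase starts, the current clusters are exactly the expanded macro clusters; and (iv) the spliced top tree has height $O(\log n_T)$, which the lemma statement omits but the rest of the paper relies on. With these repairs your argument goes through and yields the same bound and linear construction time (modulo the same appeal to DAG minimization the paper itself makes), so it is a valid, more modular alternative; the advantage of the cited proof is that it certifies the bound for a top DAG produced by (a variant of) the standard greedy construction rather than for a bespoke two-layer one.
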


\section{A Simple Index}\label{sec:simple}
We first present a simple randomized Monte-Carlo word RAM string index, that will be the starting point for our deterministic, comparison-based pointer machine solution in the later sections.

\subsection{Data Structure}
Let $T$ be the trie of the strings $S = S_1, \ldots, S_k$ and let $\TD$ be the corresponding top DAG of $T$. Our data structure augments $\TD$ with additional information. For each cluster $C$ in $\TD$ we store the following information.
\begin{itemize}
	\item If $C$ is a leaf cluster representing an edge $e$, we store the label of $e$. 
	\item If $C$ is an internal cluster with left and right child $A$ and $B$, we store the label of the edge to the \emph{rightmost child} of the top boundary node, the fingerprint $\phi(\spine(C))$, and the length $|\spine(C)|$. 
\end{itemize}
This requires constant space for each cluster and hence $O(n_{\TD})$ space in total.

\subsection{Searching} 
Given a pattern $P$ of length $m$, we denote the unique node in $T$ whose path from the root matches the longest prefix of $P$, the

Given a pattern $P$ of length $m$ we find the longest matching prefix of $P$ in $T$, i.e., the unique node $\locus_{T}(P)$ in $T$ whose path from the root matches the longest prefix of $P$, as follows. First, compute and store all fingerprints of prefixes of $P$ in $O(m)$ time and space. By Lemma~\ref{lemma:fingerprints}, we can then compute the fingerprint of any substring of $P$ in $O(1)$ time.

Next, we traverse $\TD$ top-down while matching $P$. Initially, we search for $P[1, m]$ starting at the root of $\TD$. Suppose we have reached cluster $C$ and have matched $P[1,i]$. If $i = m$ we return $m$. Otherwise ($i < m$) there are three cases:
\begin{description}
\item[Case 1: $C$ is a leaf cluster.] Let $e$ be the edge stored in $C$. We compare $P[i+1]$ with the label of $e$. We return $i+1$ if they match and otherwise $i$. 

\item[Case 2: $C$ is a horizontal cluster.] Let $A$ and $B$ be the left and right child of $C$, respectively. We compare $P[i+1]$ with the label $\alpha$ of the edge to the rightmost child of $A$. If $P[i+1] \leq \alpha$, we continue the search in $A$ for $P[i+1\dots m]$. Otherwise, we continue the search in $B$ for $P[i+1\ldots m]$.  

\item[Case 3: $C$ is vertical cluster.] Let $A$ and $B$ be the left and right child of $C$, respectively. If $|\spine(A)| > m-i$ we continue the search in $A$ for $P[i+1 \ldots m]$. Otherwise, we compare the fingerprint $\phi(\spine(A))$ with $\phi(P[i+1 \ldots i+1 + |\spine(A)|])$. If they match, we continue the search in $B$ for $P[i +1 + |\spine(A)|\ldots m]$. Otherwise, we continue the search in $A$ for $P[i+1 \ldots m]$.
\end{description}

\begin{lemma}\label{lem:correctsimple}
	The algorithm correctly computes the longest matching prefix of $P$ in $T$.	 
\end{lemma}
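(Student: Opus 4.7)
The plan is to prove the lemma by induction on the number of edges of $T$ contained in the cluster $C$ currently visited by the algorithm, with inductive invariant: whenever the search reaches $C$ having matched $P[1\ldots i]$ along the root-to-$\topboundary(C)$ path in $T$, the recursive call returns the length of the longest prefix of $P$ whose corresponding root-to-node path in $T$ ends at a node of $C$. Throughout, I assume fingerprint comparisons behave as equality tests, which holds with high probability in the Monte Carlo setting.

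The base case (Case 1, a single-edge leaf cluster) is a direct character comparison. For a horizontal merge (Case 2), $A$ and $B$ share the top boundary $v$ and partition the children of $v$ into a contiguous leftmost range (covered by $A$) and a complementary rightmost range (covered by $B$). Because trie children are ordered by edge label, the stored label $\alpha$ of the edge to the rightmost child of $v$ in $A$ acts as a separator: $P[i+1]\leq\alpha$ iff the next edge on any matching path lies in $A$, so the inductive hypothesis applies to the correct child cluster.

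For a vertical merge (Case 3), let $u=\topboundary(B)$ be the common boundary node of $A$ and $B$, and let $s=|\spine(A)|$, where $\spine(A)$ is the string of labels on the unique path in $T$ from $\topboundary(A)$ to $u$. If $s>m-i$, no match of length at most $m-i$ can reach $u$, so the search must stay in $A$. Otherwise, soundness of the fingerprint test gives $\fp(\spine(A))=\fp(P[i+1\ldots i+s])$ iff $P[i+1\ldots i+s]=\spine(A)$. In the matching subcase, the pattern descends the entire spine to $u$, so any further extension of the match lies in $B$; in the non-matching subcase $P$ must diverge from the spine strictly inside $A$, and the matching path is confined to $A$.

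The main subtlety will be the vertical case: one must argue that a successful spine match justifies continuing into $B$ without missing a potentially longer match on some subtree hanging off an internal spine node. This follows because the matched root-to-node path in $T$ is unique, the spine is exactly the unique path in $T$ from $\topboundary(A)$ to $u$, and hence any longer match must descend strictly past $u=\topboundary(B)$ and so lies in $B$. Applying the induction at the root cluster of $\TD$, which represents all of $T$ with top boundary equal to the root of $T$, then yields the lemma.
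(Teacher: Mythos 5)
Your proof is correct and follows essentially the same route as the paper's: an induction over the top-down traversal of $\TD$ maintaining that $P[1,i]$ matches the root-to-$\topboundary(C)$ path and that the deepest matching node lies in the current cluster, with the leaf/horizontal/vertical cases verified separately. The only difference is that you spell out the case analysis (separator label for horizontal merges, spine uniqueness for vertical merges) that the paper compresses into ``in each case, the algorithm maintains the invariant,'' which is a matter of detail rather than of approach.
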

\begin{proof}  We show by induction that at cluster $C$ the prefix $P[1,i]$ matches the path from the root of $T$ to $\topboundary(C)$ and $\locus_{T}(P)\in C$. If $C$ is the root of $\TD$ the empty path to $\topboundary(C)$ matches the empty prefix and $\locus_{T}(P) \in C = T$. Inductively, suppose $P[1,i]$ matches the path from the root to $\topboundary(C)$ and $\locus_{T}(P) \in C$. If $m = i$ the longest prefix is thus $P[1,m]$ and $\locus_T(P) = \topboundary(C)$. In each case, the algorithm maintains the invariant. The algorithm greedily matches as many characters from $P$ as possible, and hence at the end of the traversal the algorithm has found the longest matching prefix of $P$. 
\end{proof}

Next consider the running time. We compute all fingerprints of $P$ in $O(m)$ time. Each step of top-down traversal requires constant time and since the depth of $\TD$ is $O(\log n)$ the total time is $O(m + \log n)$. In summary, we have the following theorem.

\begin{theorem}\label{thm:simple}
	Let $S = S_1, \ldots, S_k$ be a set of strings of total length $n$, and let $\TD$ be the corresponding top DAG for the trie of $S$. On a word RAM model of computation, we can solve the compressed string indexing problem in $O(n_{\TD}) = O(n/\log_\sigma n)$ space and $O(m + \log n)$ time for any pattern of length $m$. The solution is randomized Monte-Carlo.   
\end{theorem}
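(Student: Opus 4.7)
The plan is to collect the three claims of the theorem---space, time, and randomized correctness---from the pieces already in hand, since most of the real work was sketched in the construction above; my task is mainly to organize it cleanly and invoke the right external bounds.

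For space, I would observe that the augmentation of each cluster of $\TD$ is a constant amount of data: a single edge label for leaves, or an edge label together with the fingerprint $\fp(\spine(C))$ and the integer $|\spine(C)|$ for internal clusters. Hence the augmented structure uses $O(n_{\TD})$ words, and by the Dudek--Gawrychowski bound this is $O(n/\log_\sigma n)$, matching the information-theoretic lower bound. For time, I would compute the prefix fingerprints of $P$ in $O(m)$ time by a single sweep, after which Lemma~\ref{lemma:fingerprints} delivers any substring fingerprint in $O(1)$ time. Each recursive step performs $O(1)$ work and descends one level in $\TD$: Cases~1 and~2 do one character comparison, while Case~3 does one length check and one fingerprint equality check. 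The descent follows a single root-to-leaf path which, lifted to the top tree $\T$, has length $O(\log n)$, giving a total query time of $O(m + \log n)$.

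Correctness is exactly Lemma~\ref{lem:correctsimple}, conditioned on every Case~3 fingerprint comparison agreeing with the underlying string equality. The only randomized step is that fingerprint equality check; by the Karp--Rabin collision bound together with a union bound over the $O(\log n)$ visited clusters, the probability that any such check returns a false positive is polynomially small in $n$, which is precisely the Monte-Carlo guarantee. The only conceptual subtlety worth spelling out---and the one point where I expect a careful reader to pause---is that, although $\TD$ is a DAG that can in general contain long source-to-sink paths, the search walks a single root-to-leaf path whose image in the top tree $\T$ is a genuine root-to-leaf path in a tree of height $O(\log n)$; I would make this correspondence explicit so that the $O(\log n)$ step count is not confused with some larger DAG diameter.
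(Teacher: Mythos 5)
Your proposal is correct and follows essentially the same route as the paper: constant augmentation per cluster of $\TD$ plus the Dudek--Gawrychowski bound for space, $O(m)$ prefix-fingerprint preprocessing with Lemma~\ref{lemma:fingerprints} and $O(1)$ work per step along a root-to-leaf path of length $O(\log n)$ for time, and Lemma~\ref{lem:correctsimple} together with the Karp--Rabin one-sided error for the Monte-Carlo guarantee. Your explicit remarks that the error is controlled by a union bound over the $O(\log n)$ fingerprint checks and that the traversed DAG path unfolds to a root-to-leaf path in $\T$ of height $O(\log n)$ are only slightly more detailed than the paper's wording, not a different argument.
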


In the next sections we show how to convert the above algorithm from a randomized algorithm on a word RAM machine into a deterministic algorithm on a pointer machine. We note that Theorem~\ref{thm:simple} and our subsequent solutions can be extended to other variants of prefix queries, such as \emph{counting queries}, that return the number of occurrences of $P$. To do so, we store the size of each cluster in $\TD$ and use the above top-down search modified to also record the highest cluster $E$ whose top boundary is $\locus_T(P)$. 
Since the size of $E$ is the number of occurrences of $P$, we obtain a solution that also supports counting within the same complexities. From $E$ we can also support \emph{reporting queries}, that return the strings in $S$ with prefix $P$, by simply decompressing $E$ incurring additional linear time in the lengths of the strings with matching prefix.

\section{Spine Extraction}\label{sec:spineextraction} 
We first consider how to handle vertical clusters (Case 3) deterministically on a pointer machine.  The key challenge is to efficiently extract the characters on the spine path of a vertical cluster from top to bottom without decompressing the whole cluster. We will use this to efficiently compute longest common prefixes between spine paths and substrings of $P$ in order to achieve total $O(m + \log n)$ time.  

Given the top DAG $\TD$, the \emph{spine path extraction problem} is to compactly represent $\TD$ such that given any vertical cluster $C$ we can return the characters of $\spine(C)$. We require that the characters are reported online and from top-to-bottom, that is, the characters must be reported in sequence and we can stop extraction at any point in time.  The goal is to obtain a solution that is efficient in the length of the reported prefix. In the following sections we show how to solve the problem in $O(n_{\TD})$ space and $O(m + \log n)$ total time over all spine path extractions. 

We present a new data structure derived from the top DAG called the \emph{vertical top DAG} and show how to use this to extract characters from a spine path. We then use this  to compute the longest common prefixes between a spine path and any string and plug this in to the top down traversal in the simple solution from Section~\ref{sec:simple} to obtain Theorem~\ref{thm:main}.

\subsection{Vertical Top Forest and Vertical Top DAG} 
\begin{figure}[t]
   \centering
   \includegraphics[scale=0.25]{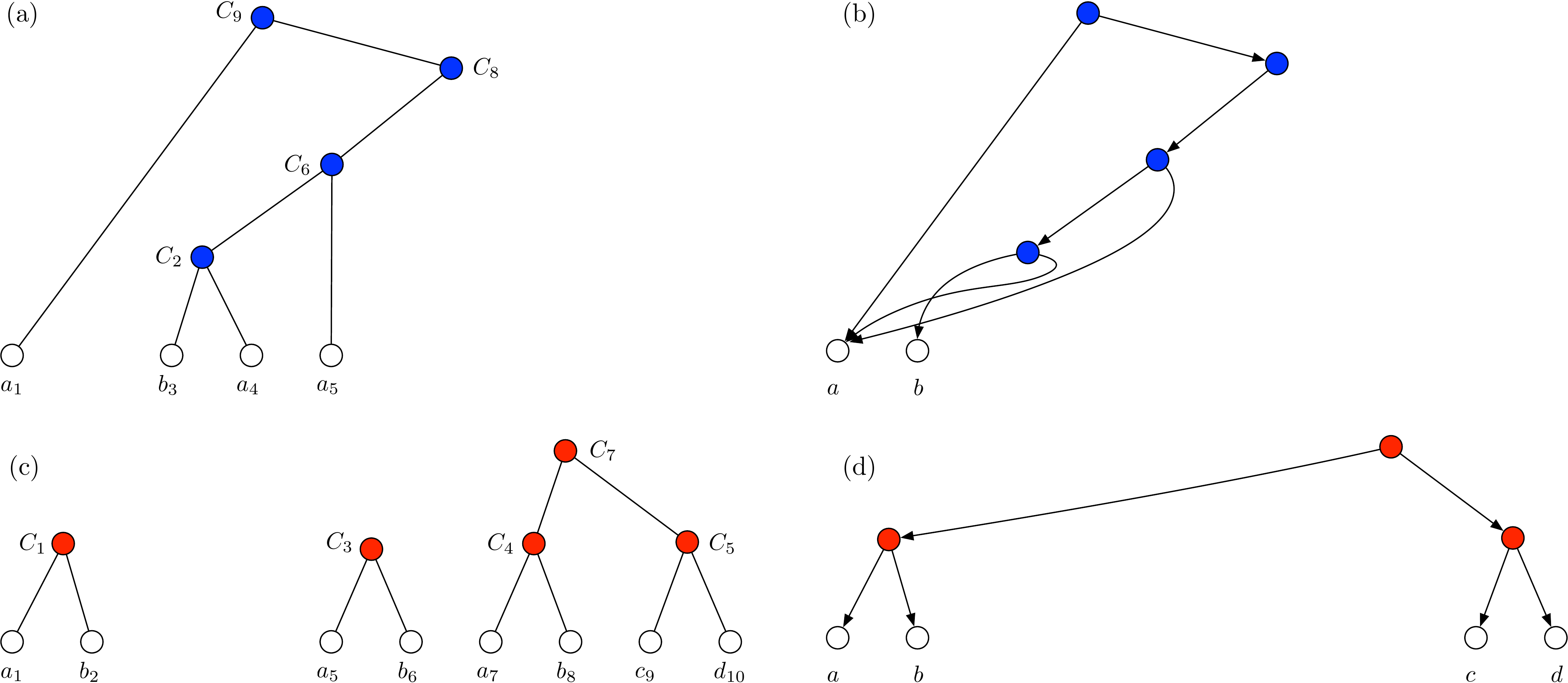}
   \caption{(a) The vertical top forest of the top tree from Figure~\ref{fig:toptrie}(c). (b) The vertical top DAG of (a). (c) The horizontal top forest of Figure~\ref{fig:toptrie}(a). (d) The horizontal top DAG of (a).}
   \label{fig:verticalhorizontal}
\end{figure}

The \emph{vertical top forest} $\V$ of $\T$ is a forest of ordered, rooted, and labeled binary trees. The nodes in $\V$ are all the vertical clusters of $\T$ and the leaf clusters of $\T$ that correspond to edges of a spine path of some cluster in $\T$. The edges of $\V$ are defined as follows. A cluster  $C$ of type (a) with children $A$ and $B$ in $\T$ has two children in $\V$. The left and right children are the unique vertical or leaf descendants of $C$ in $\T$ whose spine path is $\spine(A)$ and $\spine(B)$, respectively. A cluster  $C$ of type (b) with children $A$ and $B$ in $\T$ has a single child in $\V$, which is the unique vertical or leaf descendant of $C$ in $\T$ whose spine path is $\spine(A)$. 
See Figure~\ref{fig:verticalhorizontal}(a). We have the following correspondence between spine paths and subtrees in $\V$. 

\begin{lemma}\label{lem:verticaltopforest}
	Let $C$ be a vertical merge in $\V$ and $L$ be the leaves of $\V(C)$.  Then, $L$ are the edges on $\spine(C)$ and $|\V(C)| = O(|L|)$. Furthermore, the left-to-right ordering of $L$ corresponds to the top-down ordering of the edges on $\spine(C)$. 

\end{lemma}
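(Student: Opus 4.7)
The plan is to prove all three claims of the lemma simultaneously by structural induction on $\V(C)$, working from the leaves of $\V$ upwards.

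For the base case, $C$ is a leaf of $\V$. By the construction of $\V$, such a $C$ is a leaf cluster of $\T$ corresponding to a single edge $e$ on some spine path, so $\spine(C)$ is exactly $e$, $L = \{C\}$, $|\V(C)| = 1 = O(|L|)$, and the single-element ordering is trivial.

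For the inductive step, $C$ is an internal node of $\V$ and hence a vertical cluster of $\T$, so I split on the two subtypes of vertical merge from the definition of the edges of $\V$. If $C$ is a type-(a) merge of $A$ and $B$ in $\T$, then $\spine(C)$ is the concatenation of $\spine(A)$ and $\spine(B)$ as paths in $T$ meeting at the common boundary node, and in $\V$ the left child $A'$ and right child $B'$ of $C$ are the unique vertical-or-leaf descendants of $C$ in $\T$ with $\spine(A')=\spine(A)$ and $\spine(B')=\spine(B)$. Applying the induction hypothesis to $A'$ and $B'$, their leaves in left-to-right order are the edges of $\spine(A)$ and $\spine(B)$ in top-down order, with $|\V(A')|=O(|L_{A'}|)$ and $|\V(B')|=O(|L_{B'}|)$. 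Since $\V(A')$ precedes $\V(B')$ in the left-to-right traversal of $\V(C)$ and $\spine(A)$ precedes $\spine(B)$ in the top-down traversal of $\spine(C)$, the concatenation works out and $|\V(C)| = 1 + |\V(A')| + |\V(B')| = O(|L|)$. If $C$ is a type-(b) merge with left child $A$ contributing the spine and right child $B$ a side-branch off the spine, then $\spine(C) = \spine(A)$; $C$ has a single child $A'$ in $\V$ with $\spine(A') = \spine(A)$, and the inductive conclusion for $A'$ transfers directly, with $|\V(C)| = 1 + |\V(A')|$.

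The main obstacle I expect is the $O(|L|)$ size bound in case (b): a long chain of type-(b) merges becomes a chain of single-child nodes in $\V$ where $|L|$ does not grow from parent to child, so each additional ``$+1$'' must be paid for by leaves produced further down the chain. Controlling this will rely on a careful reading of the ``unique vertical or leaf descendant of $C$ in $\T$ whose spine path is $\spine(A)$'' clause in the definition of the edges of $\V$: by insisting on a canonical (say, rootmost) such descendant, any chain of type-(b) single-child nodes terminates at a type-(a) merge or a leaf within a number of steps that can be amortized against the leaves at the chain's bottom, so the additive contribution of each type-(b) level is absorbed into the $O(|L|)$ count. The top-down-to-left-to-right correspondence is then immediate from the inductive construction, because the definition of $\V$ places the descendant responsible for the upper portion of $\spine(C)$ as the left child of $C$.
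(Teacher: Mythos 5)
Your induction and the treatment of the ordering claim follow the same route as the paper, but the size bound is exactly where your argument stops short, and the fix you sketch does not work. You correctly identify that chains of unary type-(b) nodes in \(\V\) are the obstacle; however, you never prove that such chains are short, and without that the induction only yields \(|\V(C)| = O(|L|\cdot \mathrm{depth}(\V(C)))\), not \(O(|L|)\). Your proposed remedy -- reading the ``unique vertical or leaf descendant'' clause as picking a canonical (rootmost) candidate and then ``amortizing'' the unary levels against leaves further down -- is not an argument, and the rootmost convention would if anything make things worse: when a type-(b) cluster and its spine-carrying descendant share the same spine path, the rootmost choice is precisely the type-(b) node, which is how chains would arise.

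The paper closes this gap with a sharper structural fact, stated in one sentence: the child in \(\V\) of a type-(b) cluster is always a leaf or a type-(a) cluster, i.e.\ unary nodes of \(\V\) never have unary children. The reason is the definition of the merges: in a vertical merge the upper child \(A\) (the one carrying \(\spine(C)\), resp.\ the prefix \(\spine(A)\) of it) must have a bottom boundary node, and when one descends through horizontal merges to locate the vertical-or-leaf descendant whose spine equals \(\spine(A)\), every cluster passed through still has that bottom boundary node; a type-(b) cluster, by contrast, has no bottom boundary node (its common boundary node becomes internal and its lower part has none). Hence the descendant reached is a leaf or a type-(a) cluster. With this fact, the count is immediate: type-(a) nodes are internal nodes of a binary tree on the leaves \(L\), so there are at most \(|L|-1\) of them, and each type-(b) node sits directly above a leaf or a type-(a) node, so their number is at most \(2|L|-1\); altogether \(|\V(C)| = O(|L|)\), no amortization needed. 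Your write-up needs this (or an equivalent) structural claim proved explicitly; as it stands, the central step of the size bound is conjectured rather than established.
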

\begin{proof}
By definition of $\V$ and the ordering of children in $\T$ and $\V$ it follows that the edges on the spine in top-down order are the leaves $L$ in left-to-right order. 
A cluster of type (b) in $\V(C)$ has a child that is either a leaf or a cluster of type (a). All clusters of type (a) have two children and hence $|\V(C)| = O(|L|)$.
\end{proof}

For instance in Figure~\ref{fig:verticalhorizontal}(a), the descendant leaves of $C_6$ are $b_3$, $a_4$, $a_5$ in left-to-right ordering corresponding to the edges in the spine of $C_6$ in Figure~\ref{fig:toptrie}(b).

The \emph{vertical top DAG} $\VD$ is the DAG obtained by merging identical subtrees of $\V$ according to the DAG compression of $\TD$. See Figure~\ref{fig:verticalhorizontal}(b). 


\subsection{Spine Extraction} 
We now show how to solve spine path extraction using the vertical top DAG $\VD$. The key idea is to simulate a depth-first left-to-right order traversal of $\V(C)$ using a recursive traversal of $\VD$. In order to use spine path extraction to search for a pattern we also need to be able to continue the search in some horizontal cluster of the top DAG after extracting characters on the spine. We will therefore define what we call a \emph{vertical exit cluster}, from which we can quickly find the cluster to continue the search from. 

%

Define the \emph{vertical exit cluster}, $\vexit(C,\ell)$, for $C$ at position $\ell$, $1< \ell \leq |\spine(C)|$ to be  the lowest common ancestor of leaves $\ell-1$ and $\ell$ in $\V(C)$. Intuitively, if we have extracted the first $\ell$ characters of $\spine(C)$, then  $\vexit(C,\ell)$ is the cluster such that all leaves in the left subtree have been extracted and only one leaf in the right subtree (corresponding to the $\ell$th character) has been extracted. Our goal is to implement spine path extraction in time $O(\ell + \height(C) - \height(\vexit(C,\ell)))$. This will yield a telescoping sum when doing multiple extractions.

Our data structure consists of the vertical top DAG $\VD$. We augment each internal cluster by the label of the first edge on its spine path and each leaf cluster by the label of the stored edge. This uses $O(n_{\VD})$ space. 

Given a cluster $C$ we implement spine path extraction by simulating a depth-first left-to-right order traversal of $\V(C)$ using a recursive traversal of $\VD$. To extract the first character we return the stored label at $C$. Suppose we have extracted $\ell -1$ characters,  $1<\ell\leq |\spine(C)|$. To extract the next character continue the simulated  depth-first search until we reach a cluster $D$ in $\V(C)$ whose leftmost leaf is the $\ell$th leaf of $\V(C)$. Return the character stored at $D$ and the parent of $D$ in $\V(C)$ as $\vexit(C,\ell)$. (Note the parent of $D$ is the cluster visited right before $D$ in the simulated depth-first search.) 

By Lemma~\ref{lem:verticaltopforest}, the algorithm correctly solves spine path extraction and the total time to extract $\ell$ characters is $O(\ell + \height(C) - \height(\vexit(C,\ell)))$. 
We need a stack to keep track of the current search path in the traversal using $O(\height(\V(C))) = O(\log n_\T) = O(n_\TD)$ space.  In summary, we have the following lemma.

\begin{lemma}\label{lem:spe-simple}
Let $\VD$ be the vertical top DAG. We can represent $\VD$ in $O(n_{\VD})$ space such that given a vertical cluster $C$, we can support spine path extraction on $C$ in $O(\ell + \height(C) - \height(\vexit(C,\ell)))$ time, where $\ell$ is the length of the extracted prefix of $\spine(C)$.
\end{lemma}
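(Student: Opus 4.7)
The plan is to simulate a depth-first left-to-right traversal of the implicit tree $\V(C)$ using the compressed representation $\VD$, reporting spine characters as DFS leaves are reached. I would augment each cluster in $\VD$ with a single character: the label of the first edge of its spine for internal clusters, or the edge label itself for leaf clusters. This raises space only by a constant factor and thus keeps the total space at $O(n_{\VD})$. A query maintains a stack representing the current DFS path in $\V(C)$; expanding edges of $\V(C)$ on demand from $\VD$ is straightforward because each node of $\V(C)$ corresponds to a node of $\VD$.

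The key implementation shortcut is that to extract the $\ell$-th spine character, the algorithm does not descend all the way to the leaf representing that spine edge. Instead, it advances the DFS from the stopping point at leaf $\ell-1$ until it reaches the cluster $D$ that is the right child of $\vexit(C,\ell)$, whose leftmost leaf in $\V(C)$ is precisely the $\ell$-th leaf. It outputs $D$'s stored label, which equals the label of its leftmost descendant leaf, and records the parent of $D$ as $\vexit(C,\ell)$. Correctness follows from Lemma~\ref{lem:verticaltopforest}: the left-to-right leaves of $\V(C)$ are the edges of $\spine(C)$ in top-to-bottom order, and the parent of $D$ is by construction the lowest common ancestor of leaves $\ell-1$ and $\ell$.

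For the running time, observe that after extracting $\ell$ characters the stack has depth at most $\height(C)-\height(\vexit(C,\ell))+1$, since it terminates at a child of $\vexit(C,\ell)$. The DFS has fully explored the left subtrees hanging off those stack ancestors where it descended to the right; these explored subtrees collectively contain exactly the leaves $1,\ldots,\ell-1$, and by Lemma~\ref{lem:verticaltopforest} their total size is $O(\ell)$. Each edge of $\V(C)$ is traversed at most twice during the DFS, so the total work splits into $O(\ell)$ for the explored subtrees plus $O(\height(C)-\height(\vexit(C,\ell)))$ for the current stack, giving the claimed bound. The stack itself occupies $O(\height(\V))=O(\log n_\T)$ space, which is subsumed by the $O(n_{\VD})$ data structure.

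The main obstacle to making this work is avoiding an extra descent below $D$ that would cost $\Omega(\height(\vexit(C,\ell)))$ and effectively replace $\height(C)-\height(\vexit(C,\ell))$ by the full $\height(C)$ in the time bound. Storing the representative first-spine-edge character at every cluster is exactly the trick that lets the algorithm halt as soon as it identifies the subtree whose leftmost leaf is the next desired spine edge, which is in turn what enables the telescoping across multiple extractions needed for the overall $O(m+\log n)$ query time.
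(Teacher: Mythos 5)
Your proposal is correct and follows essentially the same route as the paper: augment each cluster of $\VD$ with the first spine-edge label, simulate a left-to-right DFS of $\V(C)$ via $\VD$ with a stack, stop at the child $D$ of $\vexit(C,\ell)$ whose leftmost leaf is the $\ell$-th leaf, and report its stored label. Your accounting of the running time (explored left subtrees of size $O(\ell)$ by Lemma~\ref{lem:verticaltopforest} plus a stack of depth $O(\height(C)-\height(\vexit(C,\ell)))$) is in fact a more explicit justification of the bound than the paper gives.
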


Note that we can use Lemma~\ref{lem:spe-simple} to compute the longest common prefix of $\spine(C)$ and any string by reporting the characters on the spine path from top-to-bottom and comparing them with the string until we get a mismatch. This uses $O(\ell +1 + \height(C) - \height(\vexit(C,\ell+1)))$ time, where $\ell$ is the length of the longest common prefix.

\section{An $O(m + \log n)$ Time Solution}\label{sec:part1puttingtogether}
We now plug in our spine path extraction algorithm from Section~\ref{sec:spineextraction} into the simple algorithm from Section~\ref{sec:simple}.  

Define the \emph{horizontal  entry cluster} for a vertical cluster $C$, denoted $\hentry(C)$, to be the highest horizontal cluster or leaf cluster in $\T(C)$ that contains all edges from $\topboundary(C)$ to children within $C$. For a horizontal cluster or a leaf the horizontal exit cluster is the cluster itself.
Note  $\hentry(C)$ is the highest horizontal cluster or leaf cluster on  the path from $C$ to the leftmost leaf of $C$.

Our data structure consists of the data structures from Section~\ref{sec:simple} without fingerprints and Section~\ref{sec:spineextraction}.
This uses $O(n_{\TD})$ space. To search for a string $P$ of length $m$, we use the same algorithm as in Section~\ref{sec:simple}, but with the following new implementation of the vertical merges.
\begin{description}
\item[Case 3: $C$ is vertical cluster.]  Recall we have reached a vertical cluster $C$ and have matched prefix $P[1, i]$.
We check if the first character on $\spine(C)$ matches $P[i+1]$. If it does not, we continue the algorithm from $\hentry(C)$.  If it does, we extract characters from $\spine(C)$ in order to compute the length $\ell$ of the longest common prefix of $\spine(C)$ and $P[i+1, m]$ and the corresponding vertical exit cluster $E = \vexit(C, \ell+1)$. Let $B$ be the right child of $E$ in $\TD$. We traverse the leftmost path from $B$ to find $\hentry(B)$ and continue the search for $P[i+\ell+1,m]$ from there.  
\end{description}

\begin{lemma}\label{lem:correctpluslog}
	The algorithm correctly computes the longest matching prefix of $P$ in $T$.		 
\end{lemma}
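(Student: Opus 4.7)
The plan is to mirror the inductive invariant used in the proof of Lemma~\ref{lem:correctsimple}: at every cluster $C$ visited during the traversal, the prefix $P[1,i]$ matches the path in $T$ from the root to $\topboundary(C)$, and $\locus_T(P) \in C$. Cases~1 and~2 are unchanged from Section~\ref{sec:simple}, so their contribution is inherited verbatim from Lemma~\ref{lem:correctsimple}, and all the work is in verifying that the new handling of Case~3 preserves the invariant.

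For Case~3, I first consider the subcase where the first character of $\spine(C)$ does not equal $P[i+1]$. No extension of the matched prefix can then descend from $\topboundary(C)$ along the spine, so within $C$ the locus is either $\topboundary(C)$ itself or is reached from $\topboundary(C)$ by a non-spine edge. Since $\hentry(C)$ is by definition the highest horizontal or leaf cluster in $\T(C)$ containing every edge from $\topboundary(C)$ to a child inside $C$, and since descending the leftmost path in $\T(C)$ preserves the top boundary through both horizontal merges (where children share the top) and vertical merges (where the left child is the top cluster), we have $\topboundary(\hentry(C)) = \topboundary(C)$ and $\locus_T(P) \in \hentry(C)$. The invariant is therefore preserved when the search restarts at $\hentry(C)$ with the same $i$.

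The delicate subcase is when the first character matches. By Lemma~\ref{lem:spe-simple}, spine extraction returns the characters of $\spine(C)$ in top-down order, so comparing them one by one with $P[i+1], P[i+2], \ldots$ correctly yields the length $\ell$ of the longest common prefix of $\spine(C)$ and $P[i+1,m]$. Let $w_\ell$ be the node reached from $\topboundary(C)$ by following the first $\ell$ spine edges, so that $P[1,i+\ell]$ matches the root-to-$w_\ell$ path. What remains is to show that $\hentry(B)$, with $B$ the right (bottom) child in $\TD$ of $E = \vexit(C,\ell+1)$, satisfies the invariant for the updated prefix length $i+\ell$. By Lemma~\ref{lem:verticaltopforest}, leaves $\ell$ and $\ell+1$ of $\V(C)$ correspond to the $\ell$th and $(\ell+1)$th edges of $\spine(C)$, which meet at $w_\ell$; hence $E$ is a vertical cluster of $\T$ whose two children are merged at common boundary $w_\ell$, with the right child $B$ being the bottom cluster of the merge. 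The defining condition of a vertical merge forbids $w_\ell$ from being a boundary of any third cluster and therefore forces $B$ to contain every outgoing edge of $w_\ell$. Hence $\topboundary(B) = w_\ell$ and $\locus_T(P) \in B$, and since leftmost descent preserves the top boundary, $\topboundary(\hentry(B)) = w_\ell$ and $\locus_T(P) \in \hentry(B)$, as required.

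The step I expect to be the main obstacle is this precise identification that the right child of $\vexit(C,\ell+1)$ has top boundary exactly $w_\ell$ and contains all of its children; it hinges on carefully tracing the bijection between leaves of $\V(C)$ and edges of $\spine(C)$ and then invoking the vertical-merge restriction. The one remaining loose end is the edge case $\ell = |\spine(C)|$, in which $\vexit(C,\ell+1)$ is undefined; here the entire spine has been matched, $\locus_T(P)$ lies at the bottom boundary of $C$ or in its subtree outside $C$, and the invariant is passed on when control returns to the parent of $C$ in the top-down traversal.
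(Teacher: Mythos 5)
Your proof follows the paper's argument essentially verbatim: the same inductive invariant, Cases 1 and 2 inherited from Lemma~\ref{lem:correctsimple}, and for Case 3 the identification $\topboundary(B)=w_\ell$ via Lemma~\ref{lem:verticaltopforest} together with the vertical-merge condition forcing $B$ (and hence $\hentry(B)$) to contain every edge out of $w_\ell$, which is exactly the paper's reasoning (the paper in fact stops at $\locus_T(P)\in B$, so your explicit treatment of $\hentry$ and of the mismatch subcase is a welcome elaboration rather than a different route). The only quibble is your closing remark on $\ell=|\spine(C)|$: the traversal never ``returns to the parent of $C$''; under the invariant $\locus_T(P)\in C$ that case simply means the locus is the bottom boundary node of $C$, so the search terminates with answer $i+\ell$.
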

\begin{proof} 
	We show by induction that at cluster $C$ the prefix $P[1,i]$ matches the path from the root of $T$ to $\topboundary(C)$ and $\locus_{T}(P)\in C$. If $C$ is the root of $\TD$ the empty path to $\topboundary(C)$ matches the empty prefix and $\locus_{T}(P) \in C = T$. Inductively, suppose $P[1,i]$ matches the path from the root to $\topboundary(C)$ and $\locus_{T}(P) \in C$. If $m = i$ the longest prefix is thus $P[1,m]$ and $\locus_T(P) = \topboundary(C)$.
Correctness of Case 1 and Case 2 follows from Lemma~\ref{lem:correctsimple}.

Consider Case 3 and let $E$ and $B$ be as in the description. By induction and correctness of spine extraction it follows that $P[1,i+\ell]$ matches  the path from the root of $T$ to $\topboundary(B)$. By induction $\locus_{T}(P) \in C$ and thus $\locus_{T}(P)$ is a descendant of $\topboundary(B)$ in $C$.
Since $\topboundary(B)$ is not a boundary node in $E$ it follows that all ancestors of $B$ in $\TD$ contains exactly the same edges out of $\topboundary(B)$ as $B$. Hence, $\locus_{T}(P) \in B$.
\end{proof}

%
%
%

Consider the time used in a vertical step from a cluster $C$ . 
The time to compute the longest common prefix computation extracting $\ell$ characters and walking to the corresponding horizontal entry cluster $\hentry(\vexit(C,\ell))$ is $O(\ell + h(C) -  h(\vexit(C,\ell) + h(\vexit(C,\ell)) - h(\hentry(\vexit(C,\ell))) = O(\ell + h(C) -  h(\hentry(\vexit(C,\ell)))$.
Hence, if we have $z$ vertical steps from clusters $C_1,\ldots,C_z$ extracting $\ell_1,\ldots,\ell_z$ characters ending in $E_i=\hentry(\vexit(C_i,\ell_i))$, respectively, we use time
$$
\sum_{i=1}^z O(\ell_i + h(C_i) -  h(E_i)) = O\left(\sum_{i=1}^z \ell_i + h(C_1) - h(E_z)\right) = 
O(m+ \log n_\T)\;.
$$
This follows from the fact that $C_1, \ldots, C_z$ and $E_1, \ldots, E_z$ all lie on the same root-to-leaf path in $\T$ and that $h(E_i)\geq h(C_{i+1})$.
As in Section~\ref{sec:simple}, the total time used at horizontal merges is $O(\log n_T)$, as $E_1, \ldots, E_z$ all lie on the same root-to-leaf path in $\T$ and we only walk down in the tree during the horizontal merges.
This concludes the proof of the $O(m + \log n)$ query time in  Theorem~\ref{thm:main}.



\section{Spine Path Extraction with Constant Overhead}\label{sec:Spine Path Extraction}
Next, we show how to achieve the $O(m \log \sigma)$ query time in   Theorem~\ref{thm:main}. Our current solutions for horizontal merges (Case 2) from Section~\ref{sec:simple} and vertical merges (Case 3) from Section~\ref{sec:part1puttingtogether} both require $\Omega(m + \log n)$ and  hence we need new techniques for both cases to achieve the $O(m\log \sigma)$ time bound. We consider vertical merges in this section and horizontal merges in the next section.

In this section, we improve the total time used on spine extraction to optimal  $O(m)$ time. To do so we first introduce and present a novel solution to a new path extraction problem on trees in Section~\ref{sec:pathextractionintrees} and then show how to use this to extract characters from the spine in Section~\ref{sec:spineextractionoptimal}.

\subsection{Path Extraction in Trees}\label{sec:pathextractionintrees}
Given a tree $T$ with $n$ nodes, the \emph{path extraction problem} is to compactly represent  $T$ such that given a node $v$ we can return the nodes on the path from the root of $T$ to $v$ in constant time per node. We require that the nodes are reported online and from top-to-bottom, that is, the nodes must be reported in sequence and we can stop the extraction at any point in time. The ordering of the nodes from top to bottom is essential. The other direction (from $v$ to the root) is trivial since we can simply store parent pointers and traverse that path using linear space and constant time per node. If we allow word RAM tricks then we can easily solve the problem in the same bounds by using an existing \emph{level ancestor data structure} \cite{BFC2004, AH2000, Dietz1991}. We present an optimal solution that does not use word RAM tricks and works on a pointer machine. 
As mentioned in the introduction, an optimal solution can be also obtained by plugging in known tools, but we believe that our method is simpler
and elegant.

\begin{figure}[t]
   \centering
   \includegraphics[scale=0.4]{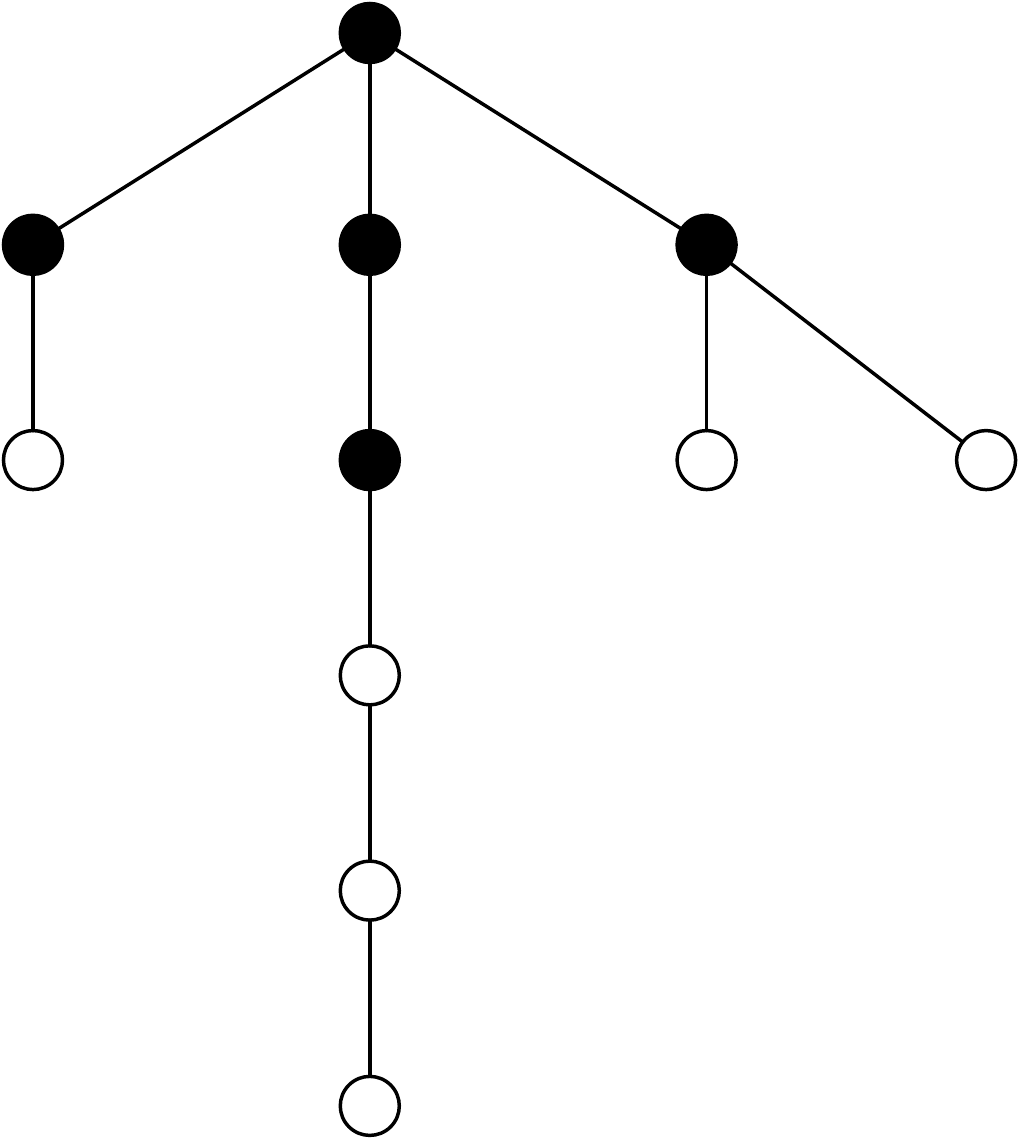}
   \caption{The tree decomposition for path extraction. The black nodes are the nodes in $T_\textrm{top}$ and the white nodes are the nodes in $T_\textrm{bot}$. The three root-to-leaf paths are stored as three linked lists sorted by increasing depth. The total size of the lists is $2 + 3 +2 = 7$. }
   \label{fig:pathextraction}
\end{figure}

Let $\depth(v)$ and $\height(v)$ be the distance from $v$ to the root and to deepest leaf in $v$'s subtree, respectively. Decompose $T$ into a top part $T_\textrm{top}$ consisting of nodes $v$, such that $\depth(v)\leq \height(v)$, and a bottom part $T_\textrm{bot}$ consisting of the remaining nodes. For each leaf $u$ in $T_\textrm{top}$ we store the path from the root of $T_\textrm{top}$ to $u$ explicitly in a linked list sorted by increasing depth. (see Figure~\ref{fig:pathextraction}). 
Note that multiple copies of the same node may be stored across different lists. Each such path to a leaf $u$ uses $O(\depth(u))$ space, and hence the total space for all paths in $T_\textrm{top}$ is $$\sum_{u \textrm{ a leaf in } T_\textrm{top}}\!\!\!\!\!\!\depth(u)  \ \ \ \leq \sum_{u \textrm{ a leaf in } T_\textrm{top}}\!\!\!\!\!\!\height(u) \ \ =\ \  O(n)\;,$$ 
where the first equality follows by definition of the decomposition and the second follows since the longest paths from a descendant leaf in $T(u)$ to a leaf $u$ in $T_\textrm{top}$ are disjoint for all the leaves $u$ in $T_\textrm{top}$. For all internal nodes in $T_\textrm{top}$ we store a pointer to a leaf below it. For all nodes $v$ in $T_\textrm{bot}$ we store a pointer to the unique ancestor $v$ that is a leaf in $T_\textrm{top}$. We answer a path extraction query for a node $v$ as follows. If $v$ is in $T_\textrm{top}$ we follow the leaf pointer and output the path  stored in this leaf from the root until we reach $v$. 
If $v$ is in $T_\textrm{bot}$ we jump to the unique ancestor leaf $u$ of $v$ in $T_\textrm{top}$.  We extract the path from the root to $u$, while simultaneously following parent pointers from $v$ until we reach $u$ storing these nodes on a stack. That is, each time we extract a node from the root-to-$u$ path we follow a parent pointer and put the next node on the stack. We stop pushing nodes to the stack when we reach $u$. When we have output all nodes from the root to the leaf in $T_\textrm{top}$ we output the nodes from the stack. Since $\depth(u) \leq \height(u)$ the path from the root to $u$ is at least as long as the path from $v$ to $u$ plus 1. Therefore, the whole path is extracted. We spend $O(1)$ time per node and hence we have the following result.

\begin{lemma}\label{lem:pathextraction}
Given a tree $T$ with $n$ nodes, we can solve the path extraction problem in linear space and preprocessing and constant time per reported node.
\end{lemma}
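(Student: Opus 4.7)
The plan is to exploit a depth/height decomposition of $T$: partition the nodes into $T_\textrm{top}$, consisting of all $v$ with $\depth(v) \leq \height(v)$, and $T_\textrm{bot}$, consisting of the remaining nodes. For each leaf $u$ of $T_\textrm{top}$, I would materialize the full root-to-$u$ path as a linked list sorted by increasing depth. The condition $\depth(u) \leq \height(u)$ lets me charge each such stored path against a downward witness of length $\height(u)$ sitting inside the subtree rooted at $u$; these witnesses live in pairwise disjoint subtrees across the leaves of $T_\textrm{top}$, so the total storage telescopes to $O(n)$.

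Next, augment every internal node of $T_\textrm{top}$ with a pointer to an arbitrary descendant leaf of $T_\textrm{top}$, and every node of $T_\textrm{bot}$ with a pointer to its unique $T_\textrm{top}$-leaf ancestor. When the query node $v$ lies in $T_\textrm{top}$, the algorithm simply follows the leaf pointer and walks the precomputed list from the root, emitting nodes until $v$ is reached, at $O(1)$ time per node. This case is essentially free.

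The interesting case, and the main obstacle, is $v \in T_\textrm{bot}$, because the required top-down output order clashes with the only efficient direction of traversal from $v$, which is bottom-up through parent pointers. My plan is to interleave the two scans: jump to the $T_\textrm{top}$-leaf ancestor $u$ of $v$, then in lockstep emit the next node of the stored root-to-$u$ list while following one parent pointer upward from the current $v$-ward node and pushing the latter onto a stack; once the top-down emission reaches $u$, drain the stack to finish. Correctness of this synchronization hinges on the decomposition inequality: since $u$ is the deepest $T_\textrm{top}$-leaf ancestor of $v$, the child of $u$ on the way to $v$ already lies in $T_\textrm{bot}$, which forces $\depth(v) \leq 2\depth(u)+1$ and hence guarantees that the root-to-$u$ list is at least as long as the $v$-to-$u$ subpath. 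This ensures the stack is fully populated before it is consumed, and each reported node still costs $O(1)$, giving the claimed bound.
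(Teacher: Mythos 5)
Your proposal is correct and follows essentially the same route as the paper: the same $\depth(v)\leq\height(v)$ decomposition into $T_\textrm{top}$ and $T_\textrm{bot}$, the same explicitly stored root-to-leaf lists with the disjoint-witness $O(n)$ space charge, and the same interleaved top-down emission with a stack of parent-pointer steps for queries in $T_\textrm{bot}$. Your synchronization argument via the child of $u$ lying in $T_\textrm{bot}$ (yielding $\depth(v)\leq 2\depth(u)+1$) is in fact a slightly more explicit justification of the length inequality the paper states directly.
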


\subsection{Optimal Spine Path Extraction}\label{sec:spineextractionoptimal}
We plug the path extraction solution into our depth-first search traversal of the vertical top DAG $\VD$ to speed up spine extraction and longest common prefix computation. Recall that given a vertical cluster $C$, our goal is to simulate a depth-first left-to-right order traversal of the subtree $\V(C)$ using the vertical top DAG $\VD$. 

We construct the \emph{left-path suffix forest} $L$ of $\VD$ as follows. The nodes of $L$ are the nodes of $\VD$. If $C$ has a left child $A$ in $\VD$ then $A$ is the parent of $C$ in $L$. Hence, any leftmost path in $\VD$ corresponds to a path from a node to an ancestor of the node in $L$. We now store $L$ with the path extraction data structure from Lemma~\ref{lem:pathextraction}.  We implement the depth-first traversal as before except that whenever the traversal reaches an unexplored cluster $C'$ in $\V(C)$ we begin path extraction for that cluster corresponding to the path from $C'$ to the leftmost descendant leaf $\hat{C}$. We extract the leaf $\hat{C}$ and then continue the depth-first traversal from there. Hence, the current search path of the depth-first traversal is partitioned into an alternating sequence of leftmost paths and right edges. Whenever we need to go up on a left edge in the traversal we extract the next node for the corresponding path extraction instance. 

To extract the topmost $\ell$ characters of $\spine(C)$ we now use constant time to find the leftmost descendant leaf of $\V(C)$ and then $O(\ell)$ time to traverse the first $\ell$ leaves. Hence, we improve the time from $O(\height(\V(C)) + \ell)$ to $O(\ell)$.  At any point during the traversal we maintain ongoing path extractions instances along the current search path. The stacks each of these need are of size at most linear in the length of their corresponding subpath of the search path and hence this requires at most $O(\log n_\VD)$ extra space. 

\begin{lemma}\label{lem:spe-fast}
We can represent the vertical top DAG $\VD$ in $O(n_{\VD})$ space such that given  a vertical cluster $C$, we can support spine path extraction on $C$ in $O(\ell)$ time, where $\ell$ is the length of the extracted prefix of $\spine(C)$.
\end{lemma}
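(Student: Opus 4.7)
The plan is to eliminate the additive $\height(\V(C))$ overhead of Lemma~\ref{lem:spe-simple} by preprocessing $\VD$ so that every leftmost descent in the simulated depth-first traversal costs $O(1)$ per node, not proportional to the full height. The key observation is that between consecutive leaf visits in a depth-first left-to-right traversal of $\V(C)$, we take exactly one right edge followed by a (possibly empty) chain of left edges to reach the next leftmost descendant leaf. If each such leftmost chain can be enumerated top-down in constant time per node, then the total work to output $\ell$ leaves of $\V(C)$, and hence $\ell$ characters of $\spine(C)$, becomes $O(\ell)$.

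First I would build an auxiliary forest $L$, the \emph{left-path suffix forest}, whose node set is the node set of $\VD$ and whose parent relation makes each cluster's left child in $\VD$ its parent in $L$. By construction, reading an $L$-root-to-$C'$ path top-down yields exactly the leftmost path of $\VD$ terminating at $C'$. This reduces the problem of enumerating a leftmost descent in $\VD$ (top-down) to a path extraction query in $L$, which by Lemma~\ref{lem:pathextraction} can be preprocessed in $O(n_\VD)$ space and answered with $O(1)$ time per reported node.

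To answer a spine extraction query starting at a vertical cluster $C$, the leftmost descendant leaf $\hat{C}$ of $C$ in $\VD$ is stored with $C$; I would issue a path extraction query on $L$ for $\hat{C}$ and use it to walk top-down from $C$ to $\hat{C}$, yielding the first leaf of $\V(C)$ and its character. From there, the depth-first simulation proceeds as in Section~\ref{sec:spineextraction}: whenever backtracking takes a right edge into a new subtree, a fresh path extraction query is initiated at that subtree's root and chased down to its leftmost leaf. The current search path in $\V(C)$ is thereby partitioned into leftmost chains, each driven by its own path extraction instance, glued together by right edges; each subsequent leaf is produced after at most one right-edge step plus constant-per-node advancement of the topmost active path extraction.

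The main obstacle I anticipate is controlling the auxiliary space used by the several path extraction instances that are simultaneously active along the current search path. Their working stacks together have size linear in the length of the current search path in $\V(C)$, which is at most $O(\height(\V(C))) = O(\log n_\VD)$, and this scratch space can be reused across successive spine extractions. Preprocessing and storage are therefore $O(n_\VD)$, and extracting $\ell$ characters of $\spine(C)$ takes $O(\ell)$ time, giving the claimed bounds.
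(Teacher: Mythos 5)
Your overall architecture is the one the paper uses: the left-path suffix forest $L$ (each cluster's left child in $\VD$ becomes its parent in $L$), the path extraction structure of Lemma~\ref{lem:pathextraction} built on $L$, one lazily advanced extraction instance per leftmost chain on the current search path, and $O(\log n_{\VD})$ reusable scratch space for their stacks. However, the way you say the structure is used has a directional error that, taken literally, breaks the $O(\ell)$ bound. First, since the parent in $L$ of a cluster is its \emph{left child} in $\VD$, the leftmost descendant leaf $\hat{C}$ has no parent in $L$ and is the root of its tree there; a path extraction query ``for $\hat{C}$'' returns only $\hat{C}$ itself and is useless. The query must be issued for $C$ (respectively, for each newly entered cluster $C'$): the root-to-$C'$ path in $L$, reported top-down, is the leftmost chain of $\VD$ read \emph{bottom-up} --- the leaf $\hat{C'}$ first, then its successive ancestors --- which is precisely the backtracking order the depth-first simulation consumes.

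Second, your ``key observation'' --- that enumerating each leftmost chain top-down (from the entered cluster down to its leftmost leaf) at $O(1)$ per node gives $O(\ell)$ total --- is false. The chain of $C$ alone can have length $\Theta(\height(\V(C)))=\Theta(\log n)$ while $\ell=2$, so paying for every node of every entered chain merely reproves Lemma~\ref{lem:spe-simple}'s bound $O(\ell+\height(C)-\height(\vexit(C,\ell)))$; moreover, in that direction Lemma~\ref{lem:pathextraction} is unnecessary, since left-child pointers already walk a chain top-down. What makes $O(\ell)$ work is the opposite order: upon taking a right edge into $C'$ you output the next character in $O(1)$ \emph{without descending} (it is stored at $C'$ as the label of the first edge of $\spine(C')$, equivalently it is the first node $\hat{C'}$ reported by the extraction instance for $C'$), and the rest of the chain is consumed only during backtracking, bottom-up, so every advancement step is charged to a node whose subtree is already exhausted, which totals $O(\ell)$ by Lemma~\ref{lem:verticaltopforest}. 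Your later sentence ``a fresh path extraction query is initiated at that subtree's root'' is consistent with the correct scheme, but the claims ``query for $\hat{C}$,'' ``walk top-down from $C$ to $\hat{C}$,'' and the top-down enumeration argument need to be replaced by the leaf-first, backtracking-order use of $L$ described above.
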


\section{Horizontal Access}\label{sec:horizontalaccess}
We now show how to efficiently handle horizontal merges (Case 2). In the simple algorithm from Section~\ref{sec:simple} we use constant time at each horizontal merge leading to an $O(\log n_\T)$ total time solution. Since we cannot afford $O(\log n_\T)$ time we instead show how to handle all horizontal merges in $O(m \log \sigma)$ time. The key idea is to convert the problem into a variant of the \emph{random access problem} for grammar compressed strings, and then design a  linear-space logarithmic-query solution to the random access problem. We describe the random access problem in Section~\ref{sec:randomaccess} and present our solution to it in Section~\ref{sec:Horizontal Access in Linear Space}, we introduce the horizontal top DAG in Section~\ref{sec:horizontaltopDAG}, and define and solve the \emph{horizontal access problem} in Section~\ref{sec:gappedgrammarshorizontalaccess}.

\vspace{-0.1in} \subsection{Grammars and Random Access}\label{sec:randomaccess}
Grammar-based compression replaces a long string $S$ by a small context-free grammar~(CFG)~$\G$. We view a grammar $\G$ as a DAG, where each node is a grammar symbol and each rule defines directed ordered edges from the righthand side to the lefthand side. Given a node $C$ in $\G$, we define $T(C)$ to be the parse tree rooted at $C$ and $S(C)$ to be the string consisting of the leaves of $T(C)$ in left-to-right order. Note that given a rule $C \rightarrow C_1 C_2 \ldots C_k$ we have that $S(C) = S(C_1) \cdot S(C_2) \cdots S(C_k)$, where $\cdot$ denotes concatenation. Given a grammar $\G$ representing a string $S$, the \emph{random access problem} is to compactly represent $\G$ while supporting fast \emph{access queries}, that is, given an index $i$ in $S$ report $S[i]$. Bille et al.~\cite{BLRSW2015} showed how to do random access in $O(\log |S|)$ time using $O(n_\G \cdot \alpha_k(n_\G))$ space\footnote{Here $\alpha_k(n)$ for any constant $k$ denotes the inverse of the $k^{th}$ row of Ackermann's function, defined
 as  $\alpha_k(n)=1+\alpha_k(\alpha_{k-1}(n))$ so that
  $\alpha_1(n)= n/2$, $\alpha_2(n)=\log n$, $\alpha_3(n)=\log^* n$, and so on.} on a pointer machine model. Furthermore, given a node $C$ in $\G$, access queries can be supported on the string $S(C)$ in time $O(\log |S(C)|)$. 

For our purposes, we need to slightly extend this result to \emph{gapped grammars}. A gapped grammar is a grammar except that each internal rule is now of the form $C \rightarrow C_1 g_1 C_2 \ldots g_{k-1} C_k$, where $g_i$ is a non-negative integer called the \emph{gap}. The string generated by $\G$ is now $S(C) = S(C_1)\texttt{0}^{g_1} S(C_2) \cdots S(C_{k-1}) \texttt{0}^{g_{k-1}} S(C_k)$ and hence the resulting string generated is as before except for the inserted gaps of runs of \texttt{0}'s. Note that $|S(C)| = |S(C_1)| + g_1 + |S(C_2)| + \cdots + g_{k-1} + |S(C_k|$. The above random access result is straightforward to generalize to gapped grammars:

\begin{lemma}[Bille et al.~\cite{BLRSW2015}]\label{lem:randomaccess} Let $S$ be a string compressed into a gapped grammar $\S$ of size $n_\S$. Given a node $v$ in $\S$, we can support random access queries in $S(v)$ in $O(\log (|S(v)|))$ time using $O(n_\S \cdot \alpha_k(n_\S))$ space. The solution works on a pointer machine model of computation. 
\end{lemma}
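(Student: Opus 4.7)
The plan is to show that the Bille et al.~\cite{BLRSW2015} data structure generalizes to gapped grammars essentially verbatim, by viewing each gap as a trivially answerable virtual child. Recall that their construction performs a heavy path decomposition of the parse DAG: each nonterminal designates as \emph{heavy} the child whose expansion is longest, and all other children are \emph{light}. At every nonterminal one stores enough information to identify, for a queried offset $i$, the child containing position $i$ and the offset within that child, and the heavy paths are equipped with an auxiliary search structure---responsible for the $\alpha_k(n_\S)$ factor---that lets one jump down along a heavy path efficiently.

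To extend to gapped rules $C \rightarrow C_1 g_1 C_2 \ldots g_{k-1} C_k$, I would regard each rule as having $2k-1$ alternating slots: the $k$ children and the $k-1$ runs of \texttt{0}'s. The per-rule lookup data is augmented with the prefix sums of all slot lengths so that the slot containing a queried position $i$, together with the offset within that slot, can be recovered exactly as before. Heaviness is defined only over children (a gap is never heavy); this preserves the crucial property that each descent into a non-heavy slot at least halves the remaining expansion length, since if the heavy child has length more than $|S(C)|/2$ then the combined length of all lighter children and all gaps is at most $|S(C)|/2$.

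Random access for offset $i$ at node $v$ then proceeds as follows. At the current nonterminal, locate the slot containing $i$. If it is a gap, return \texttt{0} immediately. If it is the heavy child, defer to the heavy-path auxiliary structure to jump down along the heavy path, updating the offset appropriately. If it is a light child, recurse into that child with the local offset. As in the ungapped case there are $O(\log |S(v)|)$ non-heavy steps, and the heavy-path structure handles all heavy descents in aggregate. The storage added per rule of arity $k$ is $O(k)$, contributing $O(n_\S)$ total for the slot tables, while the heavy-path auxiliary structure is untouched and still uses $O(n_\S \cdot \alpha_k(n_\S))$ space, so the claimed bounds follow.

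The main obstacle I anticipate is confirming that the heavy-path auxiliary structure of Bille et al.\ is truly oblivious to the gaps. That structure only interacts with the heavy child sequence along a heavy path and with the sizes of the ``side pieces'' that branch off; inserting gap slots between siblings in a rule changes neither the heavy path structure nor, from the perspective of the heavy-path search, anything other than a few precomputed prefix sums. Once this observation is made precise, all that remains is the routine check that the augmented local lookup table correctly reports \texttt{0} on an offset falling inside a gap, which is immediate from the prefix sums of slot lengths.
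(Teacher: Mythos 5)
Your proposal is correct and is essentially the approach the paper intends: the paper offers no explicit proof of this lemma, only the remark that the Bille et al.~\cite{BLRSW2015} structure ``is straightforward to generalize to gapped grammars,'' and your adaptation---treating each gap as a light virtual slot that immediately returns \texttt{0}, restricting heaviness to real children, and adding $O(k)$ prefix-sum data per rule while leaving the heavy-path search structure (the source of the $\alpha_k(n_{\S})$ factor) untouched---is exactly that generalization. The only cosmetic point is your halving argument, which you state conditionally but which holds unconditionally: any non-heavy child of $C$ has expansion at most $|S(C)|/2$, since otherwise it together with the heavy child would exceed $|S(C)|$ even before counting the nonnegative gaps.
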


\subsection{Horizontal Access in Linear Space}\label{sec:Horizontal Access in Linear Space}
Bille et al.~\cite{BLRSW2015} further showed that the inverse-Ackermann factor in the space complexity of Lemma~\ref{lem:randomaccess} can be removed if we assume a word RAM model of computation. In this section we show that this can also be achieved on a pointer machine. To this end, we need to replace a single component in the solution of Bille et al.,  their \emph{weighted level ancestor} structure.
In the weighted level ancestor problem, we are given a tree $T$ on $n$ nodes with positive weights on the edges. For every node $u\in T$, let $d(u)$ be its distance to the root, and let $\parent(u)$ be its parent. Then, the goal is to preprocess $T$ to answer the following \emph{weighted level ancestor} queries: given a non-root node $u\in T$ and a positive number $x \leq d(u)$, find an ancestor $v$ such that $d(v) \geq x$ but $d(\parent(v)) < x$. 

Without getting into the proof of Lemma~\ref{lem:randomaccess}, it suffices to say that (1) performing a random access query boils down to performing $O(\log (|S(v)|))$ weighted level ancestor queries, and (2) in order for all these $O(\log (|S(v)|))$ queries to be done in total $O(\log (|S(v)|))$    time, the time for each weighted level ancestor query should be proportional to $\log\frac{d(u)}{d(v)-d(\parent(v))}$. Intuitively, we seek a position on an edge at distance $x$ from the root, and the longer the found edge is the smaller the query time should be. We next show how to achieve such query time using linear space on a pointer machine, implying an inverse-Ackermann factor improvement to  Lemma~\ref{lem:randomaccess}. 

\begin{lemma}
\label{lem:wla}
A tree $T$ on $n$ nodes can be preprocessed in $O(n)$ space to answer a weighted level ancestor query for a node $u\in T$ and a number
$x$ in $O(1+\log\frac{d(u)}{d(v)-d(\parent(v))})$ time, where $v$ is the found ancestor of $u$.
\end{lemma}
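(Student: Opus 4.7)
The plan is to combine a weighted heavy path decomposition of $T$ with biased binary search trees and a ladder-style auxiliary structure, so that both the selection of the correct heavy path and the final within-path search are sensitive to the weight $w = d(v)-d(\parent(v))$ of the target edge. First, I would compute a heavy path decomposition of $T$ based on subtree weights (i.e., the sum of edge weights in the subtree), so that going up any light edge at least doubles the enclosing subtree weight, and consequently the number of heavy paths crossed along any root-to-leaf path is logarithmic in the total weight. For every heavy path $P$ with edge weights $w_1,\ldots,w_k$ and total weight $W_P$, I would build a biased binary search tree in the style of Bent, Sleator, and Tarjan~\cite{BentSleatorTarjan85} over the edges of $P$, keyed by depth and with item weight $w_i$; this supports locating the edge of weight $w$ covering a query depth in $O(1+\log(W_P/w))$ time, and the total space summed over all paths is $O(n)$.

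Next, to avoid a $\Theta(\log n)$ overhead when jumping between heavy paths, I would attach to the top $t$ of each heavy path $P$ a ladder: the list of ancestor heavy-path tops obtained by walking up via parent-of-top pointers from $t$, truncated once the ladder contains $|P|$ entries. By the standard ladder charging argument $\sum_P |P| = O(n)$, so the total size of all ladders is linear. Over each ladder I install a second biased binary search tree, keyed by depth and weighted by the depth gap between consecutive tops, which identifies the ancestor heavy path straddling a target depth in time logarithmic in the ratio of the ladder's covered depth to the found depth gap.

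The query algorithm for $(u,x)$ is then: jump to the top $t$ of $u$'s heavy path; if $d(t)\le x$, biased-search the heavy path's BST for the edge covering $x$ and return; otherwise biased-search $t$'s ladder for the ancestor top straddling depth $x$. If the ladder does cover depth $x$, biased-search the identified ancestor heavy path; otherwise recurse from the topmost ladder entry, which by the weight-doubling property lies on a heavy path whose enclosing subtree weight is at least twice that of $P$.

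The main obstacle will be showing that the total query time telescopes to $O(1+\log(d(u)/w))$. The plan is twofold: (a) argue that each recursive step at least doubles the enclosing subtree weight, so the sequence of ladder-search times forms a geometric series dominated by the final step; and (b) argue that the final biased in-path search yields $O(\log(d(u)/w))$ rather than $O(\log(W_{P^*}/w))$, which I expect to require further subdividing each heavy path's BST into dyadic depth-segments so that the portion actually queried always has total weight $O(d(u))$. This segmentation preserves linear space, and combined with the telescoping in (a) yields the claimed bound on a pointer machine.
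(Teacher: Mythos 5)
Your plan rests on the wrong progress measure: the heavy-path decomposition and the recursion are driven by subtree \emph{weight}, but the bound you must prove, $O(1+\log\frac{d(u)}{d(v)-d(\parent(v))})$, is purely about \emph{depths}, and the two are unrelated. Your weight-doubling argument only bounds the number of inter-path recursion steps by $O(\log W)$, where $W$ is the total edge weight of the tree, and $\log W$ can vastly exceed $\log d(u)$. Concretely, take a spine $r=p_0,p_1,\ldots,p_D=u$ of unit-weight edges, and give each $p_i$ with $i<D$ an extra leaf child $c_i$ attached by an edge of weight $100^{2D-i}$. Then every spine edge is light, every heavy path is just $\{p_i,c_i\}$ (and $\{u\}$ alone), so every ladder is truncated after one or two entries and each recursive step climbs only $O(1)$ depth units. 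The query $(u,x=1)$ has answer $v=p_1$ with $d(v)-d(\parent(v))=1$, so the lemma allows $O(\log D)$ time, yet your algorithm performs $\Theta(D)$ recursion steps, each costing $\Omega(1)$. You cannot simply lengthen the ladders, because truncating them at $|P|$ entries is exactly what makes their total size linear. A second (acknowledged) gap is the final in-path search: a biased BST over a heavy path $P^*$ gives $O(1+\log(W_{P^*}/w))$, and $W_{P^*}$ can be arbitrarily larger than $d(u)$ since the heavy path continues far below the ancestors of $u$; the ``dyadic depth-segment'' remedy is only gestured at, and locating the correct segment is itself a depth-sensitive search that needs an argument.

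For contrast, the paper's proof makes depth the progress measure from the start: it slices $T$ into depth bands $d(u)\in[2^i,2^{i+1})$, uses two stored pointers to reduce a query to a leaf of its band-tree, and then recursively splits each tree at depth $2^{i-1}$ into a top part and bottom trees, decreasing the parameter $i$ by one per $O(1)$-time step. Only the leaves of the bottom part store pointers, and they are removed before recursing, which is what keeps the space at $O(1)$ per node; and once $2^i$ drops below the length of the sought edge, that edge can no longer lie in the current tree, so the number of steps is $O(1+\log d(u)-\log(d(v)-d(\parent(v))))$. If you want to salvage your route, both the ladder lengths and the recursion progress would have to be tied to depth in a similar way, not to subtree weight.
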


\begin{proof}
We start with partitioning $T$ into \emph{slices}. The $i^\text{th}$ slice, denoted $T_{i}$, consists of all nodes $u\in T$ such that
$d(u)\in [2^{i},2^{i+1})$. Observe that each $T_{i}$ is a collection of trees. For each node $u\in T_{i}$, we store a pointer to an
arbitrary descendant $v$ such that no child of $v$ belongs to $T_{i}$, denoted $\query(u)$ . In other words, $v$ is a leaf in its
corresponding tree of $T_{i}$ (and also a descendant of $u$ that belongs to the same tree of $T_{i}$). To answer a query for a
node $u\in T_{i}$ and a number $x$, we first replace $u$ with $\query(u)$. This does not increase $\log(d(u))$ by more than 1
and, because we replace $u$ with its descendant, returns the same node. Thus, from now on we can assume that the input
to a query is a node $u\in T_{i}$ that is a leaf in its tree of $T_{i}$. For each such node, we store a pointer $\jumpslice(u)$
to the highest ancestor of $u$ that still belongs to $T_{i}$. To answer a query for a node $u\in T_{i}$ that is a leaf
in its tree of $T_{i}$ and a number $x$, we then check the following three cases:
\begin{enumerate}
\item $x \leq d(\parent(\jumpslice(u)))$, then we repeat with $u$ replaced with $\parent(\jumpslice(u))$.
\item $x > d(\parent(\jumpslice(u)))$ and $x \leq d(\jumpslice(x))$, then we return $\jumpslice(x)$.
\item $x > d(\jumpslice(u))$, then we search for the answer among the ancestors of $u$ in its tree of $T_{i}$.
\end{enumerate}
Observe that whenever Case 1 applies the value of $\log(d(u))$ decreases by at least 1, and so it is enough to show
how to separately preprocess each tree of $T_{i}$ for weighted ancestor queries in $O(1+i-\log(d'(v)-d'(\parent(v))))$ time, where $v$
is the found node and $d'(v)$ is its distance to the root of the corresponding tree of $T_{i}$ (note that the maximum value of
$d'(v)$ is $2^{i}$).

We can therefore focus on the following problem: preprocess a tree $T$ with a parameter $i$ such that $d(u) \leq 2^{i}$ for every $u\in T$ for weighted ancestor queries
in $O(1+i-\log(d(v)-d(\parent(v))))$ time, where $v$ is the found ancestor of $u$, and $u$ is always a leaf. The preprocessing proceeds
recursively. We first partition $T$ into the top part, denoted $T_\text{top}$, and a collection of trees constituting the bottom
part, denoted $T_\text{bottom}$. A node $v\in T$ belongs to $T_\text{top}$ when $d(v) \leq 2^{i-1}$.  Each leaf
$u\in T_\text{bottom}$ stores a pointer $\checkptr(u)$ to its highest ancestor that still belongs to $T_\text{bottom}$.
Let $T_{\text{bottom}-}$ denote the collection of trees obtained by removing all leaves from $T_{\text{bottom}}$.
Each leaf $u\in T_\text{bottom}$ additionally stores a pointer $\jumptop(u)$ to an arbitrary leaf in the subtree rooted
at $\parent(\checkptr(u))$ in $T_\text{top}$, and a pointer $\jumpbottom(u)$ to an arbitrary leaf in the subtree rooted
at $\parent(u)$ in $T_{\text{bottom}-}$. We apply the above construction recursively with a parameter $(i-1)$ on $T_{\text{top}}$ and on every tree of $T_{\text{bottom}-}$.  See Figure~\ref{fig:alpha} for an illustration.

\begin{figure}[t]
\centering
\includegraphics[scale=0.85]{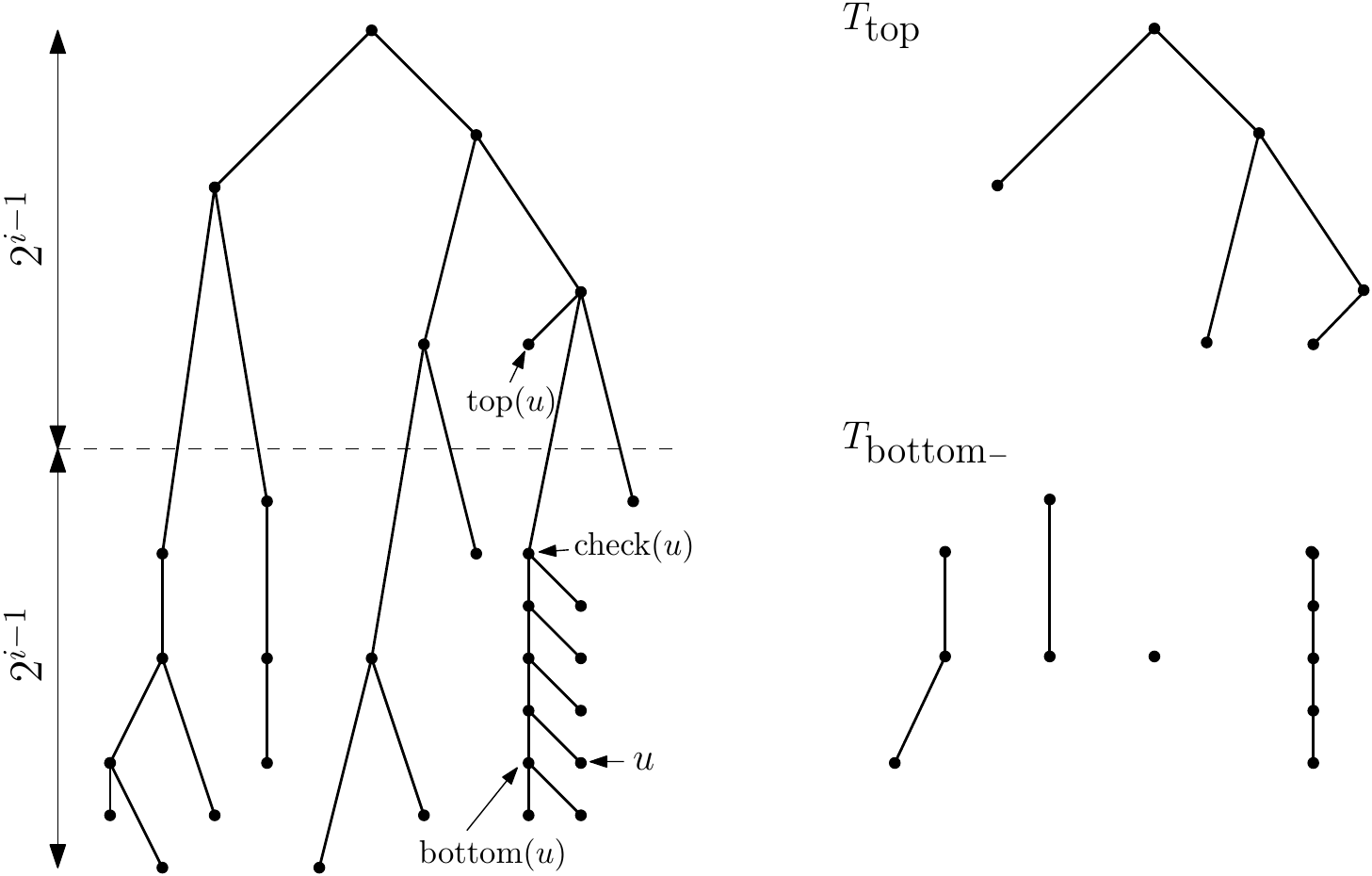}
\caption{Tree $T$ with parameter $i$ is decomposed into $T_{\text{top}}$ and $T_{\text{bottom}-}$.}
\label{fig:alpha}
\end{figure}

To answer a query for a leaf $u\in T_{\text{bottom}}$ and a number $x$, we check the following four cases:
\begin{enumerate}
\item $x \leq d(\parent(\checkptr(u)))$, then we repeat with $u$ replaced with $\jumptop(u)$ in $T_{\text{top}}$.
\item $x > d(\parent(\checkptr(u)))$ and $x\leq d(\checkptr(u))$, then we return $\checkptr(u)$.
\item $x > d(\parent(u))$, then we return $u$.
\item $x > d(\checkptr(u))$ and $x\leq d(\parent(u))$, then we repeat with $u$ replaced with $\jumpbottom(u)$ 
and $x$ decreased by $d(\checkptr(u))$ in the corresponding tree of $T_{\text{bottom}-}$.
\end{enumerate}
The cases are not mutually exclusive as it might happen that $\checkptr(u)=u$. Correctness of Case 2 and 3
is immediate. In Case 1 and 4 we recurse while maintaining the invariant that $u$ is a leaf in the current tree,
and the sought node is easily seen to belong to $T_{\text{top}}$ or $T_{\text{bottom}-}$ (because we require
$x\leq d(\parent(u))$ we can indeed consider $T_{\text{bottom}-}$ instead of $T_{\text{bottom}}$), respectively.
In every recursive step, the value of $i$ decreases by 1. Also, if $\lfloor\log(d(v)-d(\parent(v)))\rfloor=j$ then 
after $i-j+1$ steps the edge from $v$ to $\parent(v)$ cannot belong to the currently considered tree, and so
there are at most $i-j+1$ steps making the query time as required. To analyze the space, we assume that the partition
into $T_{\text{bottom}}$ and $T_{\text{top}}$ is only conceptual, and the stored information $\checkptr(u)$, $\jumptop(u)$
and $\jumpbottom(u)$ is associated with a node $u\in T$. Because the leaves of $T_{\text{bottom}}$ for which we need
to store information are then removed and do not participate further in the construction, this is indeed possible
and shows that the overall space is $O(1)$ per node of $T$. Finally, even though we have only described how
to answer a query for a leaf $u\in T_{\text{bottom}}$, the query algorithm rewritten to use the information stored
at nodes of $u\in T$ behaves as if $u\in T_{\text{bottom}}$ and hence is correct.
\end{proof}

\begin{corollary}\label{cor:randomaccess} Let $S$ be a string compressed into a gapped grammar $\S$ of size $n_\S$. Given a node $v$ in $\S$, we can support random access queries in $S(v)$ in $\log (|S(v)|)$ time using $O(n_\S)$ space. The solution works on a pointer machine model of computation. 
\end{corollary}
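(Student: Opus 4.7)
The plan is to follow the structure of the random access data structure of Bille et al.~\cite{BLRSW2015} essentially verbatim, but with their weighted level ancestor component replaced by the one built in Lemma~\ref{lem:wla}. All that remains is to verify that the improved query-time bound of Lemma~\ref{lem:wla} is exactly what is needed to get a total query cost of $O(\log |S(v)|)$, and that the linear space of Lemma~\ref{lem:wla} translates into an $O(n_\S)$ overall bound.

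First I would recall the reduction: as noted in the paragraph preceding Lemma~\ref{lem:wla}, Bille et al.\ reduce one random access query in the grammar $\S$ to $O(\log |S(v)|)$ weighted level ancestor queries on a certain tree naturally associated with $\S$ (intuitively, the left spine tree whose edge weights are the lengths of generated substrings, extended with the gap lengths so that the same reduction carries over to gapped grammars). Plugging in Lemma~\ref{lem:randomaccess} already gives an $O(\log|S(v)|)$-time random access query on gapped grammars in $O(n_\S \cdot \alpha_k(n_\S))$ space, so the only issue is shaving off the inverse-Ackermann factor.

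Next I would examine the telescoping argument. Each of the weighted level ancestor queries generated by one access is of the form ``given node $u_i$ and distance $x_i$, find the rootmost ancestor $v_i$ with $d(v_i) \geq x_i$''. Bille et al.\ show that the sequence of these queries has the property that $d(u_{i+1}) \leq d(v_i) - d(\parent(v_i))$, i.e.\ the new ``height'' is bounded by the weight of the previously found edge. Lemma~\ref{lem:wla} answers each query in $O\bigl(1 + \log\frac{d(u_i)}{d(v_i) - d(\parent(v_i))}\bigr)$ time, so the total cost across all the queries telescopes to
\[
\sum_i O\!\left(1 + \log\tfrac{d(u_i)}{d(v_i) - d(\parent(v_i))}\right) = O\!\left(\log|S(v)|\right),
\]
since $d(u_1) = O(|S(v)|)$ and consecutive ratios chain. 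This is precisely the property the paper isolates just before Lemma~\ref{lem:wla}.

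Finally I would handle the space. The tree on which the weighted level ancestor queries are performed has size $O(n_\S)$, so by Lemma~\ref{lem:wla} preprocessing uses $O(n_\S)$ space on a pointer machine. All other components of the Bille et al.\ solution already use $O(n_\S)$ space on a pointer machine; the inverse-Ackermann blow-up in their Lemma~\ref{lem:randomaccess} came solely from their weighted level ancestor structure. Replacing that single component therefore yields the claimed $O(n_\S)$ space bound while preserving the $O(\log |S(v)|)$ query time and the pointer machine model. The only mildly delicate point I expect is keeping careful track that the reduction from gapped grammars to weighted level ancestor queries indeed produces queries whose $d(u_i), d(v_i), d(\parent(v_i))$ satisfy the telescoping condition, but this is inherited directly from the original (non-gapped) argument of Bille et al.\ by treating each gap as an edge of weight $g_i$ in the underlying tree.
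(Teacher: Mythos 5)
Your proposal is correct and matches the paper's own route: the corollary is obtained exactly by taking the Bille et al.\ construction (whose only super-linear-space component on a pointer machine is the weighted level ancestor structure), substituting Lemma~\ref{lem:wla}, and using the telescoping property that each query costs $O(1+\log\frac{d(u)}{d(v)-d(\parent(v))})$ so that the $O(\log|S(v)|)$ queries sum to $O(\log|S(v)|)$ total time in $O(n_\S)$ space. The paper states this reduction in the paragraph preceding Lemma~\ref{lem:wla} rather than in a separate proof, so your write-up is essentially the same argument made explicit.
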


\subsection{Horizontal Top Tree and Horizontal Top DAGs}\label{sec:horizontaltopDAG}
Similar to the vertical top forest we define the \emph{horizontal top forest} $\H$ of $\T$ as a forest of ordered and rooted trees that consists of all horizontal clusters of $\T$ and leaves of $\T$ whose top boundary is shared with a horizontal cluster. We define the edges in of $C$ in $\H$ as follows. Let $C$ be a  horizontal cluster $C$ with children $A$ and $B$ in $\T$.  If $A$ is a horizontal cluster or a leaf then the left child of $C$ is $A$, and if $A$ is a vertical cluster then the left child of $C$ is $\hentry(A)$. Similarly, the right child of $C$ is either $B$ or $\hentry(B)$. See Figure~\ref{fig:verticalhorizontal}. 
We have the following property of~$\H$. 

\begin{lemma}\label{lem:horizontal}
	Let $C$ be a horizontal merge in $\H$. Then, the leaves of $\H(C)$ are the edges to children of the top boundary node of $C$ and the left-to-right ordering of the leaves correspond to the left-to-right ordering of the children of $C$ in $T$. All nodes in $\H(C)$ has $\topboundary(C)$ as top boundary node.
\end{lemma}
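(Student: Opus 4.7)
My plan is to prove the statement by induction on the size of $\H(C)$, in direct analogy with Lemma~\ref{lem:verticaltopforest} for the vertical top forest. The inductive invariant is the conjunction of the three claims: the leaves of $\H(C)$ are exactly the edges of $T$ from $\topboundary(C)$ to its children that lie inside $C$; these leaves are listed in the left-to-right order of those children in $T$; and every node of $\H(C)$ has $\topboundary(C)$ as its top boundary. The base case is when $C$ is a leaf of $\H$. By the definition of $\H$, such a $C$ is a leaf cluster of $\T$ whose top boundary is shared with a horizontal cluster, so $C$ is a single edge $(\topboundary(C),v)$ of $T$ and the invariant is immediate.

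For the inductive step I would first extract two structural facts from Figure~\ref{fig:merge}. In every horizontal merge (types (c), (d), (e)) the shared boundary node of $A$ and $B$ is a top boundary, so $\topboundary(A)=\topboundary(B)=\topboundary(C)$. Moreover, since the left child in $\T$ is by definition the cluster visited first in the preorder traversal of $T$, the children of $\topboundary(C)$ lying inside $A$ appear strictly to the left of those lying inside $B$. I would then case-split on what the left child $A'$ of $C$ in $\H$ is. If $A$ is horizontal or a leaf, $A'=A$ and the induction hypothesis applies directly. If $A$ is vertical, $A'=\hentry(A)$, and I would invoke the definition of $\hentry$ from Section~\ref{sec:part1puttingtogether}: $A'$ is a horizontal or leaf descendant of $A$ in $\T$ whose top boundary equals $\topboundary(A)=\topboundary(C)$ and which already contains every edge from $\topboundary(A)$ to its children within $A$. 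Induction on $A'$ then yields the required statement about the leaves of $\H(A')$ and the preservation of the top boundary; the right child is handled symmetrically. Concatenating the two leaf sequences, using the preorder fact above, gives exactly the child-edges of $\topboundary(C)$ inside $C$ in left-to-right order.

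The main obstacle I anticipate is the vertical-child case: one must verify that routing through $\hentry$ neither drops nor duplicates a child-edge of $\topboundary(C)$, and preserves the top boundary exactly. Everything else is bookkeeping, so the proof essentially reduces to checking this one routine but essential property of $\hentry$, which is immediate from its definition. A minor subtlety worth noting is that $A'$ (and symmetrically $B'$) is itself either horizontal or a leaf, which is exactly what makes it a legitimate node of $\H$ and allows the induction hypothesis to be applied without further work.
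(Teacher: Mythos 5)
Your proposal is correct and matches the paper's argument in substance: the paper's proof is just a compressed version of the same reasoning, invoking that $\topboundary(A)=\topboundary(B)=\topboundary(C)$ for horizontal merges, that $\topboundary(\hentry(A))=\topboundary(A)$ and $\hentry(A)$ contains all child-edges of $\topboundary(A)$ within $A$, and the preorder ordering of children in $\T$ and $\H$. Your explicit induction on $|\H(C)|$ merely unfolds what the paper states as following directly from these definitions, so no new idea is needed or missing.
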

\begin{proof}
By definition of $\H$ and the ordering of the children in $\T$ and $\H$ it follows that the edges to children of the top boundary node of $C$ correspond to the leaves in $\H(C)$ in left-to-right order. Let $C$ be a horizontal cluster with children $A$ and $B$ in $\T$. Then $\topboundary(A) = \topboundary(B) = \topboundary(C)$. Furthermore, by definition $\topboundary(\hentry(C)) = \topboundary(C)$. Hence, all nodes in $\H(C)$ has $\topboundary(C)$ as top boundary node.
\end{proof}
For instance in Figure~\ref{fig:verticalhorizontal}(c) the descendant leaves of $C_7$ are $a_7$, $b_8$, $c_9$, and $d_{10}$ in left to right ordering corresponding to the edges to the children of $\topboundary(C_7)$. Given the horizontal top forest we define the \emph{horizontal top DAG} $\HD$ as the DAG obtained by merging the subtrees of $\H$ according to the DAG compression of $\T$ into $\TD$.

\subsection{Gapped Grammars and Horizontal Access}\label{sec:gappedgrammarshorizontalaccess}
Let $C$ be an internal cluster in $\H$. The \emph{spine child} of $C$ is the unique child of $C$ that contains the first edge of $\spine(C)$. 
A descendant cluster $D$ of $C$ is a  \emph{spine descendant} of $C$ if all clusters on the path from $C$ to $D$ are spine children of their parent. Define the \emph{horizontal exit cluster} for a horizontal cluster $C$ and character $\alpha$, denoted $\hexit(C,\alpha)$, to be the highest cluster in $\H(C)$ that has the unique leaf in $\H(C)$ labeled $\alpha$ as a spine descendant.

Given the horizontal top DAG $\HD$, the \emph{horizontal access problem}, is to compactly represent $\HD$ such that given a horizontal merge $C$ and a character $\alpha \in \Sigma$, we can efficiently determine if $\topboundary(C)$ has an edge to a child labeled $\alpha$ within $C$ and if so return the horizontal exit cluster $\hexit(C,\alpha)$. 
In this section, we show how to solve the horizontal access problem in $O(n_\HD)$ space and $O(\log \sigma)$ time. 

The \emph{characteristic vector} of a cluster $C$ is a binary string encoding the labels of edges to children of $\topboundary(C)$. More precisely, given a character $\alpha \in \Sigma$ define $\rank(\alpha) \in \{1, \ldots, \sigma\}$ as the rank of $\alpha$ in the sorted order of characters of $\Sigma$. Also, given a cluster $C$ in $\H$ define $\rank(C)$ to be the set of ranks of leaf labels in $\H(C)$. We define the characteristic vector $S(C)$ recursively as follows. If $C$ is a leaf cluster $S(C) = \texttt{1}$ and if $C$ is an internal cluster with children $C_1, \ldots, C_k$, then 
$
S(C) =	S(C_1)\texttt{0}^{g_1}S(C_2) \cdots S(C_{k-1})\texttt{0}^{g_{k-1}}S(C_k) ,
$
where $g_i = \min(\rank(C_{i+1})) - \max(\rank(C_{i})) + 1$. 
Note that $|S(C)| \leq \sigma$ for any cluster $C$. From the definition we have the following correspondence between the characteristic vector and the leaf labels of a cluster. 

\begin{lemma}\label{lem:characteristic} 
Given a cluster $C$ in $\H$ and a character $\alpha \in \Sigma$, $\alpha$ is a  leaf label in $\H(C)$ iff $S(C)[\rank(\alpha) - \min(\rank(C))] = \texttt{1}$.
\end{lemma}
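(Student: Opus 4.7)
The plan is to prove the statement by structural induction on the cluster $C$ in $\H$. To make the induction go through I would simultaneously establish the auxiliary length identity
\[
|S(C)| = \max(\rank(C)) - \min(\rank(C)) + 1.
\]
The base case is a leaf cluster $C$ whose label $\alpha_0$ has some rank $r$: here $S(C)=\texttt{1}$ is a single character and $\rank(C)=\{r\}$, so both the length identity and the lemma are immediate, since the only valid position $\rank(\alpha)-\min(\rank(C))=0$ corresponds to $\alpha=\alpha_0$.

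For the inductive step, let $C$ have children $C_1,\ldots,C_k$ in $\H$. By Lemma~\ref{lem:horizontal} the sets $\rank(C_1),\ldots,\rank(C_k)$ are disjoint and appear in left-to-right order. The length identity follows from a short telescoping computation using the inductive hypothesis on each $|S(C_i)|$ together with the definition of $g_i$, which simplifies to $|S(C_i)|+g_i = \min(\rank(C_{i+1})) - \min(\rank(C_i))$; summing over $i=1,\ldots,k-1$ and adding $|S(C_k)|$ leaves only $\max(\rank(C_k)) - \min(\rank(C_1)) + 1 = \max(\rank(C)) - \min(\rank(C)) + 1$.

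Using the same telescoping identity, the starting position of the substring $S(C_j)$ inside $S(C)$ equals $\min(\rank(C_j)) - \min(\rank(C))$. Hence if $\alpha$ is a leaf label of $\H(C_j)$ for some $j$, then by the inductive hypothesis $S(C_j)$ has a $\texttt{1}$ at offset $\rank(\alpha)-\min(\rank(C_j))$, which lands in $S(C)$ at position $\rank(\alpha)-\min(\rank(C))$, as required. Conversely, if $S(C)[\rank(\alpha)-\min(\rank(C))] = \texttt{1}$, then this index cannot fall within any of the zero-filled gap blocks $\texttt{0}^{g_i}$, so it must lie inside some $S(C_j)$, and applying the inductive hypothesis in reverse produces a leaf of $\H(C_j)\subseteq \H(C)$ labeled $\alpha$.

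The only subtlety worth being explicit about is the out-of-range case in which $\rank(\alpha)\notin[\min(\rank(C)),\max(\rank(C))]$: by the length identity the index $\rank(\alpha)-\min(\rank(C))$ falls outside $S(C)$ and is treated as $\texttt{0}$, consistent with $\alpha$ having no leaf in $\H(C)$. Beyond this minor edge case the argument is pure positional bookkeeping driven by the telescoping of $|S(C_i)|+g_i$, so I do not expect any substantive obstacle.
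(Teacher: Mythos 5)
Your proof is correct; the paper offers no argument for this lemma (it is stated as immediate from the definition of the characteristic vector), and your structural induction with the auxiliary length identity $|S(C)| = \max(\rank(C)) - \min(\rank(C)) + 1$ is precisely the positional bookkeeping that claim hides, including the right use of Lemma~\ref{lem:horizontal} to get disjoint, left-to-right ordered rank sets and the out-of-range case. One remark: your telescoping step $|S(C_i)| + g_i = \min(\rank(C_{i+1})) - \min(\rank(C_i))$ presupposes $g_i = \min(\rank(C_{i+1})) - \max(\rank(C_i)) - 1$, whereas the paper's text defines $g_i$ with $+1$; the paper's formula is evidently a typo (with it the lemma would already fail for two sibling leaves of consecutive ranks), so your version is the intended one, but you should say explicitly that you are using the corrected gap.
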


Let $R_1, \ldots, R_z$ be the root clusters of the trees in $\H$ and note that if we add a virtual root cluster $R$ as the parent of $R_1, \ldots, R_z$, $\H$ is a gapped parse tree for the string $S = S(R_1) \cdots S(R_z)$.  Hence, the horizontal top DAG $\HD$ is a gapped grammar for the same string. By Lemma~\ref{lem:characteristic} we can determine if there is an edge labeled $\alpha$ out of $\topboundary(C)$ in $C$ using a random access query on the corresponding gapped grammar using time $O(\log |S(C)|) = O(\log \sigma)$. If this edge exists, we can also find $\hexit(C,\alpha)$ in the same time using similar ideas. More precisely, we have the following result. 

\begin{lemma}\label{lem:horizontalaccess}
Given a cluster $C$ in $\H$ and a character $\alpha \in \Sigma$ we can solve the horizontal acces problem in $O(n_{\HD})$ space and $O(\log \sigma)$ time. 
\end{lemma}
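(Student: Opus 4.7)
The plan is to view $\HD$ as the gapped grammar from Lemma~\ref{lem:horizontal} and preprocess it with Corollary~\ref{cor:randomaccess}, using $O(n_\HD)$ space and supporting random access to any $S(D)$ in $O(\log |S(D)|)$ time. To answer a query $(C,\alpha)$, I first compute $r = \rank(\alpha) - \min(\rank(C)) + 1$; if $r \notin [1, |S(C)|]$ then no child-edge labeled $\alpha$ exists out of $\topboundary(C)$ within $C$, and otherwise I read the bit $S(C)[r]$ by random access. By Lemma~\ref{lem:characteristic}, the edge exists iff this bit is $\texttt{1}$. Since $|S(C)| \le \sigma$, this step takes $O(\log \sigma)$ time.

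When the bit is $\texttt{1}$, the remaining task is to return $\hexit(C,\alpha)$. The key observation is that $\hexit(C,\alpha)$ is precisely the child endpoint of the deepest non-spine-child edge on the parse-tree path in $\H(C)$ from $C$ down to the $\alpha$-leaf, or $C$ itself when every edge on this path is a spine-child edge. I precompute, for each cluster $D$ of $\HD$, which of its (at most two) children is its spine child; this is well-defined because the spine of $D$ depends only on $D$'s identity in $\HD$, so the extra storage is $O(n_\HD)$. I then augment the random access navigation to maintain a pointer, initialized to $C$ and updated to the child endpoint whenever a non-spine-child edge is traversed. When the navigation terminates at the $\alpha$-leaf, this pointer holds $\hexit(C,\alpha)$.

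The main obstacle is that the navigation behind Corollary~\ref{cor:randomaccess} uses weighted level ancestor queries (Lemma~\ref{lem:wla}) that may skip over many intermediate parse-tree nodes, so a naive edge-by-edge tracking would miss some edges. To resolve this I align the path decomposition underlying the random access structure with the maximal spine chains of $\HD$: since spine-child is a fixed attribute of each cluster, the spine chains partition the edges of $\HD$ into vertex-disjoint paths and therefore form a valid decomposition. With this alignment, any jump performed by the random access lies entirely within a single spine chain and thus crosses only spine-child edges, so skipped edges never alter the tracked pointer and correctness is preserved. The augmentation adds only $O(1)$ overhead per step, so the overall query runs in $O(\log \sigma)$ time with $O(n_\HD)$ space, establishing the lemma.
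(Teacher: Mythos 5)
Your first step (existence test via Lemma~\ref{lem:characteristic} and Corollary~\ref{cor:randomaccess}) and your characterization of $\hexit(C,\alpha)$ as the lower endpoint of the deepest non-spine-child edge on the path from $C$ to the $\alpha$-leaf (or $C$ itself) both match the paper. The gap is in how you locate that edge. You propose to re-base the random access structure on the \emph{spine chains} of $\HD$ instead of its heavy paths, so that every jump stays inside one chain and crosses only spine-child edges. But the $O(\log|S(C)|)$ query bound of Corollary~\ref{cor:randomaccess} is not a property of an arbitrary vertex-disjoint path decomposition: it relies on the heavy-path property that each light edge at least halves the length of the represented string, which bounds the number of path segments on a root-to-leaf descent by $O(\log\sigma)$ and makes the weighted level ancestor costs of Lemma~\ref{lem:wla} telescope. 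The spine child of a cluster is determined by which child contains the first spine edge, not by subtree or string size, so a descent in $\H(C)$ can switch spine chains $\Theta(\height(\H(C)))$ times (up to $\Theta(\log n)$), each switch forcing a fresh weighted ancestor query with no telescoping guarantee. Hence your ``aligned'' structure no longer answers queries in $O(\log\sigma)$ time, and the lemma's time bound is not established.

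The paper avoids this by leaving the heavy-path decomposition (and thus the random access query bound) untouched and adding a second structure: it marks the edges of $\HD$ going to non-spine children, marks the corresponding edges of the heavy-path suffix forest $F$, and contracts all unmarked edges to obtain a forest $F'$ (reweighting contracted nodes by the highest included node). A weighted ancestor query on $F'$ then finds the deepest marked edge on each heavy subpath at no more cost than the corresponding query on $F$, so walking the $O(\log\sigma)$ alternating heavy subpaths and light edges top-down — one $F'$ query per heavy subpath, an $O(1)$ mark check per light edge — yields $\hexit(C,\alpha)$ within the same $O(\log\sigma)$ budget. If you want to salvage your write-up, you need some device of this kind that reports the deepest non-spine edge \emph{inside} each heavy-path jump, rather than changing the decomposition itself.
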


\begin{proof} 
By construction the characteristic vector of $S(C)$ has length at most $\sigma$. Hence, by  Corollary~\ref{cor:randomaccess}, we can determine if there is an edge $\alpha$ out of $\topboundary(C)$ in $C$ using $O(n_{\HD})$ space and $O(\log |S(C)|)= O(\log \sigma)$ time. If this is the case, we need to find $\hexit(C,\alpha)$ in the same complexity. To do so, we augment the random access result of Corollary~\ref{cor:randomaccess} as follows. 

We need the following definitions from Bille et al.~\cite{BLRSW2015} applied to $\HD$ to explain the approach. The \emph{heavy-path decomposition} of $\HD$ partitions $\HD$ into heavy and light edges with the property that any root-to-leaf path in $\HD$ is decomposed into an alternating sequence of $O(\log \sigma)$ heavy paths and single light edges. The \emph{heavy-path suffix forest} $F$ of $\HD$ compactly encodes the heavy paths of $\HD$ in $O(n_{\HD})$ space and has the property that a subpath of a heavy path in $\HD$ uniquely corresponds to a path from a node $v$ to an ancestor of $v$ in $F$. Our random access solution from Corollary~\ref{cor:randomaccess} on $\HD$ solves $O(\log \sigma)$ weighted ancestor queries on $F$ using Lemma~\ref{lem:wla} and computes the alternating sequence of heavy subpaths and single light edges from $C$ to the leaf cluster containing the edge labeled $\alpha$. 

We construct a new contracted forest $F'$ from $F$ as follows. Imagine we mark all the edges going to non-spine children in $\HD$.  Then, $\hexit(C,\alpha)$ is the highest descendant of $C$ whose path to the leaf containing $\alpha$ only consist of unmarked edges. Now mark the corresponding edges in $F$ and construct $F'$ be contracting all unmarked edges. The weight of a contracted node is the weight of the highest of its included nodes in $F$. A weighted ancestor query on $F'$ now identifies the node corresponding to the lowest horizontal entry cluster on the heavy path in $\HD$. Since we contract edges in $F$ and reweigh them by adding the contracted edges the time for the weighted ancestor query is no more than the time for the corresponding query in $F$. 

To find $\hexit(C,\alpha)$, we traverse the alternating sequence of heavy paths and light edges from top-to-bottom in $\HD$ to find the lowest marked edge whose lowest endpoint is $\hexit(C,\alpha)$. At each heavy path we use a weighted ancestor query and at each light edge we simply check if it is marked. In total, this takes $O(\log \sigma)$ time.   
\end{proof}

\section{An $O(m\log \sigma)$ Solution}\label{sec:mlogsigma}
We can now plug in the spine extraction from Section~\ref{sec:spineextractionoptimal} and the horizontal access from Section~\ref{sec:horizontalaccess} into the simple algorithm from Section~\ref{sec:simple}. 
Define the \emph{vertical entry cluster} for a horizontal cluster $C$, denoted $\ventry(C)$, to be the highest vertical cluster or leaf cluster in $\T(C)$ that contains the first edge on $\spine(C)$. 

Our data structure consists of the data structure from Section~\ref{sec:spineextractionoptimal} for spine path extraction and the data structure from Section~\ref{sec:horizontaltopDAG} for horizontal access. Furthermore, we store for each vertical cluster in $\TD$ a pointer to its horizontal entry cluster and for each horizontal cluster a pointer to its vertical entry cluster. In total this uses $O(n_\TD)$ space.

To search we alternate between horizontal accesses using Lemma~\ref{lem:horizontalaccess} and spine path extractions using Lemma~\ref{lem:spe-fast}. Instead of traversals to find entry clusters we jump directly using the new pointers.
Specifically, we have the following modified algorithm:

Initially, we search for $P[1, m]$ starting at the root of $\TD$. Suppose we have reached cluster $C$ and have matched $P[1,i]$. If $i = m$ we return $m$. Otherwise ($i < m$) there are three cases:
\begin{description}
\item[Case 1: $C$ is a leaf cluster.] Let $e$ be the edge stored in $C$. We compare $P[i+1]$ with the label of $e$. We return $i+1$ if they match and otherwise $i$. 

\item[Case 2: $C$ is a horizontal cluster.] Compute $E=\hexit(C,P[i+1])$. If $P[i+1]$ does not match return $i$. Otherwise, continue the search for $P[i+1,m]$ from $\ventry(E)$.

\item[Case 3: $C$ is vertical cluster.] We check if the first character on $\spine(C)$ matches $P[i+1]$. If it does not we continue the algorithm from $\hentry(C)$.  Otherwise, we extract characters from $\spine(C)$ in order to compute the length $\ell$ of the longest common prefix of $\spine(C)$ and $P[i+1, m]$ and the corresponding vertical exit cluster $E = \vexit(C, \ell+1)$. Continue the search for $P[\ell+1,m]$ from $\hentry(E)$.  
\end{description}

\begin{lemma}\label{lem:longestmatchingprefix}
		The algorithm correctly computes the longest matching prefix of $P$ in $T$.
\end{lemma}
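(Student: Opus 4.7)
The plan is to follow the same invariant-based induction used in the proofs of Lemma~\ref{lem:correctsimple} and Lemma~\ref{lem:correctpluslog}. Namely, I will show that whenever the search reaches a cluster $C$ having matched the prefix $P[1,i]$, the following invariant holds: $P[1,i]$ equals the sequence of edge labels from the root of $T$ to $\topboundary(C)$, and $\locus_T(P)\in C$. The base case, where $C$ is the root of $\TD$, is immediate since $\topboundary(C)$ is the root of $T$, $i=0$, and $C=T$. Also, as soon as $i=m$, the invariant forces $\locus_T(P)=\topboundary(C)$, so the algorithm's termination is correct. It remains to verify that each of the three cases preserves the invariant.

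Case~1 is identical to the corresponding case in Lemma~\ref{lem:correctsimple}: a leaf cluster is a single edge attached to $\topboundary(C)$, so comparing $P[i+1]$ to its label determines whether the match extends by one character. For Case~2, by the definition of $\hexit$ (Section~\ref{sec:gappedgrammarshorizontalaccess}), $E=\hexit(C,P[i+1])$ is the highest cluster of $\H(C)$ whose spine descendant leaf carries the label $P[i+1]$. By Lemma~\ref{lem:horizontal} all clusters of $\H(C)$ share $\topboundary(C)$ as their top boundary, so $\topboundary(E)=\topboundary(C)$ and the first edge of $\spine(E)$ is the unique $P[i+1]$-labeled edge out of $\topboundary(C)$ inside $C$. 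Thus $\locus_T(P)$ lies in $E$ (since the only way the match can proceed out of $\topboundary(C)$ within $C$ is along this edge). By definition of $\ventry$, $\ventry(E)$ is the highest vertical or leaf cluster in $\T(E)$ containing this first spine edge, so its top boundary is still $\topboundary(C)$ and $\locus_T(P)\in\ventry(E)$. After walking this one edge, $P[1,i+1]$ matches the path from the root to $\topboundary(\ventry(E))$'s child reached via the spine, which is exactly what the inductive call on $\ventry(E)$ with the updated index $i+1$ requires. Here the horizontal access correctness from Lemma~\ref{lem:horizontalaccess} handles the subcase where no such edge exists and the algorithm returns $i$.

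For Case~3 I will reuse the argument of Lemma~\ref{lem:correctpluslog} almost verbatim. Spine extraction via Lemma~\ref{lem:spe-fast} correctly yields the longest common prefix of length $\ell$ between $\spine(C)$ and $P[i+1,m]$, and $E=\vexit(C,\ell+1)$ is, by its definition as the lowest common ancestor in $\V(C)$ of the $\ell$th and $(\ell+1)$st spine leaves, the cluster whose right child $B$ in $\TD$ has $\spine$ starting with the mismatching character. Consequently $\topboundary(B)$ is the node on the spine at depth $\ell$ below $\topboundary(C)$, so $P[1,i+\ell]$ matches the root-to-$\topboundary(B)$ path. Since $\topboundary(B)$ is an internal (non-boundary) node of $E$, every ancestor of $B$ in $\TD$ contains the same set of edges leaving $\topboundary(B)$ as $B$ itself, and hence the edges out of $\topboundary(B)$ within $\hentry(E)$ coincide with those within $B$. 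This yields $\locus_T(P)\in\hentry(E)$, and the subcase where the first spine character already mismatches ($\ell=0$) simply jumps to $\hentry(C)$, which by definition has $\topboundary(C)$ as its top boundary and contains all edges out of it inside $C$.

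The main thing to be careful about is that the algorithm never ``loses'' $\locus_T(P)$ when it shortcuts through $\ventry$ or $\hentry$; the key invariants to check are (i) $\topboundary(\ventry(E))=\topboundary(E)$ (from the definition of the vertical entry cluster, the spine of $\ventry(E)$ starts at the same node) and (ii) $\topboundary(\hentry(E))=\topboundary(E)$ together with the ``all edges to children within $C$'' property noted in Lemma~\ref{lem:correctpluslog}. Once these two facts are in place, both shortcuts are transparent and the induction goes through, completing the proof.
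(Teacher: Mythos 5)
Your overall strategy (the invariant that $P[1,i]$ matches the root-to-$\topboundary(C)$ path and $\locus_T(P)\in C$, base case at the root, reuse of Lemma~\ref{lem:correctsimple}/Lemma~\ref{lem:correctpluslog} for Cases~1 and~3) is exactly the paper's, but the new content of this lemma is Case~2, and there your key step is asserted rather than proved. You write that $\locus_T(P)$ lies in $E=\hexit(C,P[i+1])$ ``since the only way the match can proceed out of $\topboundary(C)$ within $C$ is along this edge.'' That argument only shows $\locus_T(P)$ is a descendant of the child $v$ reached by the $P[i+1]$-labeled edge; it does not show that the \emph{sub}cluster $E$, which is a proper connected subgraph of $C$, actually contains that descendant. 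A priori, descending from $C$ to $E$ (and further to $\ventry(E)$) could cut off part of $v$'s subtree via a bottom boundary node. The paper closes this exact hole by observing that, because $E$ is the \emph{highest} cluster in $\H(C)$ having the $P[i+1]$-leaf as a spine descendant, every cluster on the path from $C$ down to $E$ has $v$ and all descendants of $v$ in $C$ as internal (non-boundary) nodes, so the whole subtree below $v$ survives the descent and $\locus_T(P)\in E$, and then by the definition of $\ventry(E)$ also $\locus_T(P)\in\ventry(E)$. Your closing paragraph flags the right worry (not ``losing'' the locus through the shortcuts), but the invariants (i) and (ii) you list only control top boundary nodes, i.e.\ the matched-prefix half of the invariant, not the containment half; without the internal-node argument the induction does not go through.

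A secondary slip: in Case~2 the algorithm continues the search for $P[i+1,m]$ from $\ventry(E)$, so the matched prefix at $\ventry(E)$ is still $P[1,i]$ (which indeed matches the path to $\topboundary(\ventry(E))=\topboundary(C)$); the character $P[i+1]$ is only consumed in the subsequent vertical or leaf step. Your phrasing that the inductive call at $\ventry(E)$ carries the updated index $i+1$ would break the invariant in the stated form, since $P[1,i+1]$ matches the path to $v$, one edge below $\topboundary(\ventry(E))$. This is easily repaired, but as written the bookkeeping does not match the algorithm.
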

\begin{proof} 
	We show by induction that at cluster $C$ the prefix $P[1,i]$ matches the path from the root of $T$ to $\topboundary(C)$ and $\locus_{T}(P)\in C$. If $C$ is the root of $\TD$ the empty path to $\topboundary(C)$ matches the empty prefix and $\locus_{T}(P) \in C = T$. Inductively, suppose $P[1,i]$ matches the path from the root to $\topboundary(C)$ and $\locus_{T}(P) \in C$. If $m = i$ the longest prefix is thus $P[1,m]$ and $\locus_T(P) = \topboundary(C)$.
Correctness of Case 1 and Case 3 follows from Lemma~\ref{lem:correctpluslog}.

Consider Case 2. There are two cases. If $P[i+1]$ does not match, then by induction $\locus_T(P) = \topboundary(C)$ and we are done.

Otherwise, $\topboundary(C)$ has an edge to a child $v$ labeled $P[i+1]$ in $C$ and $\locus_T(P)$ is a descendant of $v$. 
Let $E$ be as in the description. 
By Lemma~\ref{lem:horizontal} $\topboundary(E)= \topboundary(C)$ and by the definition of $\ventry(E)$ we have $\topboundary(\ventry(E)) = \topboundary(E)$. Hence, by induction $P[1,i]$ matches the path from the root of $T$ to $\topboundary(\ventry(E))$. 
Recall, that the horizontal exit cluster is the highest horizontal cluster in $C$ that has $P[i+1]$ as a spine descendant. Hence, every cluster on the path from $C$ to $E$ has $v$ and all descendants of $v$ in $C$ as internal nodes. In particular, $\locus_T(P)\in E$ and hence by definition $\locus_T(P)\in \ventry(E)$.  
\end{proof}

Consider the alternating sequence of horizontal accesses and spine extractions. Each time we go from a horizontal access to a spine extraction the current character of $P$ must match the first character on the spine. Hence, each horizontal access is on a distinct character of $P$ and the total number of horizontal accesses is at most $m$. By Lemma~\ref{lem:horizontalaccess} it follows that the total time for  horizontal accesses is  $O(m \log \sigma)$. Since the sequence is alternating the number of spine extractions is at most $m+1$. Hence, by Lemma~\ref{lem:spe-fast} the total time for spine extractions is at most $O(m)$. This concludes the proof of the $O(m \log \sigma)$ query time in  Theorem~\ref{thm:main}.

\section{Lower Bound}\label{sec:lowerbound}

In this section we prove Theorem~\ref{thm:comparison}. Namely, we show that any structure storing a set $S$ of strings of total length $n$ over
an alphabet of size $\sigma$ needs to perform $\Omega(\min(m+\log n,m\log\sigma))$ comparisons
to decide if a given pattern string $P[1,m]$ belongs to $S$. Every comparison should be of the form
``$P[i] \leq c$'', where $c$ is a character. Note that the size of the structure is irrelevant for us.
We start with a technical lemma that is the gist of our lower bound.

\begin{lemma}
\label{lem:comparison}
For any $\sigma\geq 2$ and $m$, any comparison-based algorithm that given a string $P[1,m]$ over
an alphabet of size $\sigma$ checks if $\sum_{i=1}^{m} P[i] = 0 \pmod 2$ needs to perform
$\Omega(m\log \sigma)$ comparisons in the worst case.
\end{lemma}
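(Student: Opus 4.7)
The plan is a standard adversary argument in the comparison-based decision tree model. For each position $i\in\{1,\ldots,m\}$ the adversary maintains an interval $I_i\subseteq\{0,1,\ldots,\sigma-1\}$ of values still consistent with $P[i]$, initialised to $I_i=\{0,1,\ldots,\sigma-1\}$. Whenever the algorithm asks a comparison $P[i]\leq c$, the adversary splits $I_i$ into the two subintervals $I_i\cap[0,c]$ and $I_i\cap[c+1,\sigma-1]$, commits to whichever half is larger, and replaces $I_i$ by that half (ties broken arbitrarily). Each $I_i$ remains a nonempty interval and every comparison constrains only a single position, so the adversary's answers are simultaneously realisable by at least one input string. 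The crucial invariant is that after $k_i$ comparisons on position $i$ we have $|I_i|\geq\lceil\sigma/2^{k_i}\rceil$.

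Set $K=\lfloor\log_2\sigma\rfloor$ and suppose for contradiction that the algorithm makes strictly fewer than $mK$ comparisons in the worst case. Running it against the adversary, $\sum_{i=1}^{m} k_i<mK$, so by pigeonhole some position $i^{*}$ receives at most $K-1$ comparisons, and therefore $|I_{i^{*}}|\geq\lceil\sigma/2^{K-1}\rceil\geq 2$. Since $I_{i^{*}}$ is an interval of at least two consecutive integers it contains values $v_0$ and $v_1$ with $|v_0-v_1|=1$, which necessarily have opposite parities. Fix arbitrary values $w_i\in I_i$ for every $i\neq i^{*}$ and form two strings $P^0,P^1$ that agree with the $w_i$'s off $i^{*}$ and satisfy $P^0[i^{*}]=v_0$, $P^1[i^{*}]=v_1$. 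Both strings are consistent with every answer in the adversary's transcript, so the decision tree reaches the same leaf on both and outputs the same bit, yet $\sum_i P^0[i]$ and $\sum_i P^1[i]$ differ by $1$ and hence have opposite parities modulo~$2$. This contradicts correctness and yields the required $\Omega(m\log\sigma)$ lower bound.

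The obstacles are essentially bookkeeping: one must verify that the halving strategy really preserves the invariant $|I_i|\geq\lceil\sigma/2^{k_i}\rceil$ (which uses the composition identity $\lceil\lceil s/2\rceil/2\rceil=\lceil s/4\rceil$), that trivial comparisons whose answer is already forced by the current $I_i$ can be safely charged without ever shrinking it, and that the two strings $P^0$ and $P^1$ are indeed both compatible with the adversary's transcript, which they are because every recorded comparison constrains only one coordinate and each $I_i$ was built to satisfy all comparisons on that coordinate. The only genuinely conceptual point, and the one I would flag as the heart of the proof, is noticing that a unit change at a single coordinate already flips the sum modulo~$2$, so the adversary only has to keep one coordinate ambiguous rather than all of them in order to force different correct answers on two transcript-compatible inputs.
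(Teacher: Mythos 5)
Your proof is correct, and it takes a genuinely different route from the paper's. The paper argues by counting: there are at least $\sigma^{m}/4$ even-parity strings, a decision tree of depth less than $m\log\sigma-2$ has fewer than $\sigma^{m}/4$ leaves, and since each leaf corresponds to a box $[a_{1},b_{1}]\times\cdots\times[a_{m},b_{m}]$, some leaf contains two distinct even-parity strings; incrementing one coordinate of one of them by $1$ (staying inside the box, since the other string witnesses room above) produces an odd-parity input reaching the same leaf, a contradiction. You instead run an explicit adversary that keeps, for each coordinate, the larger half of its feasible interval, and then use pigeonhole over coordinates: with fewer than $m\lfloor\log_{2}\sigma\rfloor$ comparisons some coordinate is queried at most $\lfloor\log_{2}\sigma\rfloor-1$ times, so its interval still contains two consecutive (hence opposite-parity) values, and the two completions follow the same root-to-leaf path while requiring different answers. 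Both arguments exploit the same two structural facts -- that the set of inputs consistent with a transcript is a product of intervals, and that a unit change in a single coordinate flips the parity -- but the mechanisms differ: the paper's counting argument is shorter and gives an explicit additive constant in the depth bound, whereas your adversary argument is more constructive, avoids counting the even-parity strings altogether, and localizes the ambiguity to a single under-queried coordinate, which makes the invariant $|I_{i}|\geq\lceil\sigma/2^{k_{i}}\rceil$ do all the work. Your bookkeeping is sound (the larger-half rule preserves intervals, forced comparisons never shrink an interval, and $\lceil\sigma/2^{K-1}\rceil\geq 2$ for $K=\lfloor\log_{2}\sigma\rfloor$), so the argument goes through as stated.
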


\begin{proof}
The number of strings $P[1,m]$ over an alphabet of size $\sigma$ such that $\sum_{i=1}^{m} P[i] = 0 \pmod 2$
is at least $\sigma^{m-1}\lfloor \sigma/2\rfloor \geq \sigma^{m}/4$. Consider the decision tree $T$
corresponding to a comparison-based algorithm that decides if $\sum_{i=1}^{m} P[i] = 0 \pmod 2$
using less than $m\log\sigma-2$ comparisons in the worst case. Each node of $T$ corresponds to a subset of
possible inputs of the form $[a_{1},b_{1}]\times \ldots \times [a_{m},b_{m}]$, in particular the root of $T$
corresponds to $[1,\sigma]\times\ldots\times [1,\sigma]$ and its leaves correspond to disjoint subsets
of inputs for which the answer is the same (yes or no) that together cover the whole $[1,\sigma]\times\ldots\times [1,\sigma]$.
Because the depth of $T$ is assumed to be less than $m\log\sigma-2$, $T$ contains less than
$2^{m\log\sigma-2}=\sigma^{m}/4$ leaves, so there exists a leaf corresponding to a subset of inputs
$[a_{1},b_{1}]\times \ldots \times [a_{m},b_{m}]$ and two distinct strings $P[1,m]$ and $Q[1,m]$ such that
$\sum_{i=1}^{m} P[i] = \sum_{i=1}^{m} Q[i] = 0 \pmod 2$ and $P[i],Q[i]\in [a_{i},b_{i}]$ for every $i=1,\ldots,m$.
Because $P[1,m]$ and $Q[1,m]$ are distinct, there exists $j$ such that $P[j]\neq Q[j]$, and without losing generality
$P[j] < Q[j]$. We define a new string $P'[1,m]$ by setting $P'[i]=P[i]$ for every $i\neq j$ and $P'[j]=P[j]+1$.
Then $\sum_{i=1}^{m} P'[i] = 1 + \sum_{i=1}^{m} P[i] = 1 \pmod 2$ and $P'[i] \in [a_{i},b_{i}]$ for every $i=1,\ldots,m$,
so the algorithm incorrectly decides that the answer for $P'[1,m]$ is the same as for $P[1,m]$.
\end{proof}

\noindent We proceed to the main part of the lower bound. Fix $\sigma \geq 2$, $n$ and $m\leq n$. We consider two cases.

\begin{description}
\item[$n\geq m\sigma^{m}$.] The set $S$ contains all strings $P[1,m]$ such that $\sum_{i=1}^{m} P[i] = 0 \pmod 2$.
There are at most $\sigma^{m}$ of such strings, and each of them is of length $m$, making their total
length at most $m\sigma^{m} \leq n$. Any structure that stores $S$ and allows checking if a given pattern $P[1,m]$
belongs to $S$ implies a comparison-based algorithm that checks if $\sum_{i=1}^{m} P[i] = 0 \pmod 2$.
By Lemma~\ref{lem:comparison}, this needs $\Omega(m\log \sigma)$ comparisons.

\item[$n < m\sigma^{m}$.] We choose the largest integer $\ell$ such that $n\geq m\sigma^{\ell}$ (by the assumption
on $n$ and $m\leq n$, $\ell \in [1,m)$). Now the set $S$ contains all strings $P[1,m]$ such that $\sum_{i=1}^{\ell} P[i] = 0 \pmod 2$
and $P[\ell+1]=\ldots=P[m]=\texttt{0}$. The total length of all such strings is at most $m\sigma^{\ell} \leq n$.
Any structure that stores $S$ and allows checking if a given pattern $P[1,m]$ belongs to $S$ implies a
comparison-based algorithm that checks if $\sum_{i=1}^{\ell} P[i] = 0 \pmod 2$ and additionally $P[\ell+1]=\ldots=P[m]=\texttt{0}$.
When executed with $P[1,m]=\texttt{0}^{m}$ the algorithm clearly needs to access every $P[i]$ and so perform
at least $m$ comparisons. Additionally, the algorithm can be converted into a procedure that given a pattern
$P[1,\ell]$ checks if $\sum_{i=1}^{\ell} P[i] = 0 \pmod 2$, which by Lemma~\ref{lem:comparison} requires
$\Omega(\ell\log \sigma)$ comparisons. Combining these two lower bounds we obtain that
$\Omega(m+\ell\log\sigma)$ comparisons are necessary. Rewriting the condition on $\ell$ and
using the assumption that $\ell \geq 1$, we obtain $\ell = \lfloor \log(n/m) / \log \sigma \rfloor \geq 1/2 \log(n/m) / \log \sigma$,
making our lower bound $\Omega(m+\log(n/m))=\Omega(m-\log m + \log n)=\Omega(m+\log n)$.
\end{description}

Combining the above two cases give us a lower bound of $\Omega(\min(m+\log n,m\log\sigma))$, because
depending on the value of $n$ we have a lower bound of either $\Omega(m\log \sigma)$ or $\Omega(m+\log n)$,
thus the minimum of these two is always a correct lower bound. This proves Theorem~\ref{thm:comparison}.

\bibliographystyle{plainurl}
\bibliography{paper}

\end{document}